\newcommand{\edges}[1][]%
{
}
\theoremstyle{plain}
\newtheorem{theorem}{Theorem}
\newtheorem{lemma}[theorem]{Lemma}
\newtheorem{proposition}[theorem]{Proposition}
\newtheorem{corollary}[theorem]{Corollary}
\theoremstyle{definition} 
\newtheorem{definition}[theorem]{Definition} 
\newtheorem{example}[theorem]{Example} 
\newtheorem*{conj*}{Conjecture}
\def\calign@preamble{%
   &\hfil\strut@
    \setboxz@h{\@lign$\m@th\displaystyle{##}$}%
    \ifmeasuring@\savefieldlength@\fi
    \set@field
    \hfil
    \tabskip\alignsep@
}
\let\cmeasure@\measure@
\patchcmd\cmeasure@{\divide\@tempcntb\tw@}{}{}{}
\patchcmd\cmeasure@{\divide\@tempcntb\tw@}{}{}{}
\patchcmd\cmeasure@{\ifodd\maxfields@
  \global\advance\maxfields@\@ne
  \fi}{}{}{}    
\newcommand\tinymatrix[1]
\renewcommand\thickspace{\kern2pt} \scriptstyle\begin{smallmatrix} #1 \end{smallmatrix} \hspace{-2pt} \right)}
\newcommand\ignore[1]{}
\DeclareMathOperator{\Tr}{Tr}
\renewcommand\dag{\ensuremath{\dagger}}
\newcommand\C{\ensuremath{\mathbb{C}}}
\newcommand\pdag{{\phantom{\dagger}}}
\def\arraystretch{1.0}
\newcommand\grid[1]{\ensuremath{\def\arraystretch{1.4}\begin{array}{|c|c|c|c|c|c|c|}\hline#1\\\hline\end{array}}}
\newcommand\diag{\mathrm{diag}}
\newcommand\inv{{-1}}
\newcommand\I{\ensuremath{\mathbb I}}
\newcommand\M{\ensuremath{\mathcal M}}
\newcommand\F{\ensuremath{\mathcal F}}
\newcommand\super[2]{\stackrel{\makebox[0pt]{\smash{\tiny #1}}}{#2}}
\newcommand\T{\ensuremath{\mathrm T}}
\newcommand\bra[1]{\langle #1|}
\newcommand\ket[1]{{|} #1 \rangle}
\newcommand\braket[2]{\langle #1 | #2 \rangle}
\newcommand\ketbra[2]{|#1 \rangle \hspace{-1pt} \langle #2 |}
\begin{document}

\title{Quantum Latin squares and unitary error bases}
\author{\begin{tabular}{c@{\hspace{20pt}}c}
Benjamin Musto & Jamie Vicary
\\
\texttt{benjamin.musto@cs.ox.ac.uk}
&
\texttt{jamie.vicary@cs.ox.ac.uk}
\end{tabular}
\\[20pt]
Department of Computer Science, University of Oxford}
\date{\today}
\maketitle

\begin{abstract}
In this paper we introduce \textit{quantum Latin squares}, combinatorial quantum objects which generalize classical Latin squares, and investigate their applications in quantum computer science. Our main results are on applications to \textit{unitary error bases} (UEBs), basic structures in quantum information which lie at the heart of procedures such as  teleportation, dense coding and error correction. We present a new method for constructing a UEB from a quantum Latin square equipped with extra data. Developing construction techniques for UEBs has been a major activity in quantum computation, with three primary methods proposed: \textit{shift-and-multiply}, \textit{Hadamard}, and \textit{algebraic}. We show that our new approach simultaneously generalizes the shift-and-multiply and Hadamard methods. Furthermore, we explicitly construct a UEB using our technique which we prove cannot be obtained from any of these existing methods.
\end{abstract}

\section{Introduction}

We begin with the definition of a quantum Latin square.

\begin{definition}
A \textit{quantum Latin square of order $n$} is an  $n$-by-$n$ array of elements of the Hilbert space $\C^n$, such that every row and every column is an orthonormal basis. 
\end{definition}

\begin{example}
\label{ex:qls}
Here is a quantum Latin square given in terms of the computational basis elements \mbox{$\{\ket 0, \ket 1, \ket 2, \ket 3\} \subset \C ^4$:}
\begin{equation*}
\grid{\ket{0} & \ket{1} & \ket{2} & \ket{3}
\\\hline
\frac{1}{\sqrt{2}}(\ket{1}-\ket{2})
& \frac{1}{\sqrt{5}}(i\ket{0}+2\ket{3})
& \frac{1}{\sqrt{5}}(2\ket{0}+i\ket{3})
& \frac{1}{\sqrt{2}}(\ket{1}+\ket{2})
\\\hline
\frac{1}{\sqrt{2}}(\ket{1}+\ket{2})
& \frac{1}{\sqrt{5}}(2\ket{0}+i\ket{3})
& \frac{1}{\sqrt{5}}(i\ket{0}+2\ket{3})
& \frac{1}{\sqrt{2}}(\ket{1}-\ket{2})
\\\hline
\ket{3} & \ket{2} & \ket{1} & \ket{0}}
\end{equation*}
\end{example}

\noindent
It can readily be checked that along each row, and along each column, the elements form an orthonormal basis for $\C^4$.
We can compare this to the classical notion of Latin square~\cite{latinsquare}.
\begin{definition}
\label{def:ls}
A \textit{classical Latin square of order $n$} is an $n$-by-$n$ array of integers in the range $\{0, \ldots, n-1\}$, such that every row and column contains each number exactly once.
\end{definition}

\noindent
By interpreting a number $k \in \{0, \dots, n-1\}$ as a computational basis element $\ket k \in \C^n$, we can turn an array of numbers into an array of Hilbert space elements:
\newcommand\kp[1]{\phantom{|} #1 \phantom{\rangle}}
\begin{equation}
\grid{
\kp 3 & \kp 1 & \kp 0 & \kp 2
\\\hline
1 & 0 & 2 & 3
\\\hline
2 & 3 & 1 & 0
\\\hline
0 & 2 & 3 & 1
}
\qquad\leadsto\qquad
\grid{
\ket 3 & \ket 1 & \ket 0 & \ket 2
\\\hline
\ket 1 & \ket 0 & \ket 2 & \ket 3
\\\hline
\ket 2 & \ket 3 & \ket 1 & \ket 0
\\\hline
\ket 0 & \ket 2 & \ket 3 & \ket 1
}
\end{equation}
It is easy to see that the original array of numbers is a classical Latin square if and only if the corresponding grid of Hilbert space elements is a quantum Latin square. However, as Example~\ref{ex:qls} makes clear, not every quantum Latin square is of this form.

Our main results are on the construction of \textit{unitary error bases} (UEBs)~\cite{klapp}, also known as unitary operator bases. These are basic structures in quantum information which play a central role in quantum teleportation~\cite{teleportation}, dense coding~\cite{dense} and error correction~\cite{shor}. Since UEBs are hard to find, and given their wide applicability, construction techniques for UEBs have been widely studied~\cite{knill,klapp,werner2001all,ghosh}. In this paper, we propose a new method for construction of UEBs:
\begin{itemize}
\item Quantum shift-and-multiply method (\textbf{QSM}). Requires a quantum Latin square and a family of Hadamard matrices. (See Definition~\ref{def:qsm}.)
\end{itemize}
We compare this to the other methods that have been proposed in the literature:
\begin{itemize}
\item Shift-and-multiply method (\textbf{SM}). Requires a classical Latin square and a family of Hadamard matrices. (See Definition~\ref{def:sm}.)
\item Hadamard method (\textbf{HAD}). Requires a pair of mutually-unbiased bases. (See Definition~\ref{def:hadueb}.)
\item Algebraic method (\textbf{ALG}). Requires a finite group equipped with a projective representation, satisfying certain properties. (See Definition~\ref{def:algueb}.)
\end{itemize}
Our theorems concern the relationships between these constructions. In Theorems~\ref{thm:smqsm} and~\ref{thm:MUBQLS}, we prove that \textbf{QSM} contains \textbf{SM} and \textbf{HAD} as special cases. We also use \textbf{QSM} to construct a concrete unitary error basis $\mathcal M$ (Example~\ref{ex:qlsueb}), and prove that it is not equivalent to one arising from \textbf{SM}, \textbf{HAD} or \textbf{ALG} (Corollaries~\ref{cor:MnotSM}, \ref{cor:Mnothad} and~\ref{cor:Mnotnice} respectively.)

The relationships between these constructions, up to a standard notion of equivalence of UEBs (see Definition~\ref{def:equivUEB}), are indicated by the following Venn diagram:
\begin{equation}
\def\s{1.1}
\begin{aligned}
\begin{tikzpicture}[thick, scale=\s, font=\scriptsize, xscale=1]
\node [ellipse, draw, minimum width=\s*2cm, minimum height=\s*2cm] at (0,0) {};
\node [ellipse, draw, minimum width=\s*2cm, minimum height=\s*2cm] at (1,0) {};
\node [ellipse, draw, minimum width=\s*2cm, minimum height=\s*2cm] at (0.5,-1) {};
\node [ellipse, draw, minimum width=\s*5cm, minimum height=\s*3cm] at (1,0) {};
\node [ellipse, draw, minimum width=\s*6.3cm, minimum height=\s*4.3cm] at (1,-0.20) {};
\node [below] at (1,1.5) {\textbf{QSM}};
\node [below] at (1,1.95) {\textbf{UEB}};
\node [above] at (0.5,-2) {\textbf{ALG}};
\node [right] at (-1,0) {\textbf{SM}};
\node [left] at (2,0) {\textbf{HAD}};
\node at (2.75,0) {$\mathcal M$};
\end{tikzpicture}
\end{aligned}
\end{equation}
Our work strongly extends previous results, in an area that has not seen progress since 2003. But there is much still to be settled: in particular, we do not know whether  \textbf{ALG} is a subset of \textbf{QSM}, or whether \textbf{QSM} equals \textbf{UEB}.

Categorical quantum mechanics is a research programme in which powerful techniques of monoidal category theory are used to understand quantum computational phenomena~\cite{bob-book, abramskycoecke2004, surveycategoricalquantummechanics}, using a graphical notation which can make the high-level structure of computations easier to understand. The main results of this paper were originally developed using this approach~(see also \cite{mustothesis}), although we have chosen to present them here in a conventional way. We feel this is a good advert for the power of categorical quantum mechanics; certainly, we could not have developed our results without using these techniques.

There are interesting connections between Hadamard matrices, unitary error bases and quantum Latin squares. In Section~\ref{sec:HadQLS} we show that a quantum Latin square can be constructed from any Hadamard matrix. Hadamard matrices are mathematically equivalent to the data for a pair of \textit{mutually unbiased bases}~\cite{mubs}, the study and classification of which is a major activity in quantum computer science~\cite{mafu, test, bart, speng}. It has also been shown that in some cases a family of mutually unbiased bases can be extracted from a UEB~\cite{uebtomub}. So quantum Latin squares can be built from Hadamards, which can be built from UEBs, which can be built from quantum Latin squares; an interesting tapestry of results for which we currently lack a good intuition.

\vspace{5pt}\noindent
\textbf{Acknowledgements.}
The authors are grateful to Dominic Verdon for useful discussions, and to EPSRC for financial support.

\section{\mbox{Quantum Latin squares from Hadamard matrices}}
\label{sec:HadQLS}

In this section we introduce some basic properties of quantum Latin squares, show how to construct a quantum Latin square from a Hadamard matrix, and prove that our quantum Latin square of Example~\ref{ex:qls} is not equivalent to one arising in this way.

We begin by developing a precise notation for working with quantum Latin squares. Throughout, we assume we are working with a quantum Latin square of order $n$, and that indices $i,j,k,p,q$ range from $0$ to $n-1$.

\begin{definition}
\label{def:notation}
For a quantum Latin square $Q$, we define the following:
\begin{itemize}
\item $Q_i$ is the matrix whose columns are the entries of the $i$th row of $Q$;
\item $Q_{ij} \in \C^n$ is the Hilbert space element at the $i$th row and $j$th column of $Q$;
\item $Q_{ijk} := (Q_{ij})_k = \langle k| Q_{ij} \rangle\in \C$ is the coefficient of the basis vector $\ket k$.
\end{itemize}
\end{definition}

\noindent
For a matrix $M$, it is a standard notation to write $M_{ij}$ for the element at the $i$th row and $j$th column. Combining this with Definition~\ref{def:notation}, we have the following:
\begin{equation}
\label{eq:indexswap}
(Q_i ^\pdag ) _{jk} ^\pdag = Q_{ikj} ^\pdag
\end{equation}
Note that the order of the final two indices changes.

Given a collection of numbers $Q_{ijk} \in \C$, we can easily identify when they arise from a quantum Latin square. For a matrix $M$, we write $M ^*$ for the conjugate matrix, $M^\T$ for the transpose matrix, and $M ^\dag = (M^*)^\dag = (M^\dag)^*$ for the conjugate transpose matrix.
\begin{lemma}
\label{lem:qls12}
A family of numbers $Q_{ijk} \in \C$ arise from a quantum Latin square if and only if they satisfy the following properties for all $i,p,q$:
\begin{align}
\label{eq:qls1}
\textstyle \sum _{j} Q_{ipj} ^* Q_{iqj} ^\pdag &= \delta _{pq} ^\pdag
\text{, or equivalently the matrices $Q_i$ are unitary}
\\
\label{eq:qls2}
\textstyle \sum _j Q _{pij} ^* Q_{qij} ^\pdag &= \delta _{pq}
\end{align}
\end{lemma}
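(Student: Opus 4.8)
The plan is to unwind the definitions and use the elementary fact that, in the finite-dimensional space $\C^n$, a collection of $n$ vectors forms an orthonormal basis precisely when it is an orthonormal \emph{set}: spanning is then automatic by a dimension count. Thus the entire content of the definition of a quantum Latin square is that the $n$ entries in each row are pairwise orthonormal, and likewise for each column, and the task reduces to translating these two conditions into conditions on the coefficients $Q_{ijk}$.

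First I would expand the inner product of two entries of a fixed row $i$. Writing $Q_{ip} = \sum_k Q_{ipk}\ket k$ and using that $\{\ket k\}$ is orthonormal, one obtains $\inprod{Q_{ip}}{Q_{iq}} = \sum_j Q_{ipj}^* Q_{iqj}$. Hence ``row $i$ is an orthonormal basis'' is equivalent to $\sum_j Q_{ipj}^* Q_{iqj} = \delta_{pq}$ for all $p,q$, which is exactly \eqref{eq:qls1}. Running the identical computation with the roles of the row and column indices exchanged — comparing $Q_{pj}$ with $Q_{qj}$ inside a fixed column $j$ — gives $\inprod{Q_{pj}}{Q_{qj}} = \sum_k Q_{pjk}^* Q_{qjk}$, so ``column $j$ is an orthonormal basis'' is exactly \eqref{eq:qls2}. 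Since these equivalences hold entrywise and in both directions, a family of numbers satisfying \eqref{eq:qls1} and \eqref{eq:qls2} conversely defines vectors $Q_{ij} := \sum_k Q_{ijk}\ket k$ all of whose rows and columns are orthonormal, hence a quantum Latin square.

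It then remains to identify \eqref{eq:qls1} with unitarity of $Q_i$. By Definition~\ref{def:notation} the $p$th column of $Q_i$ is the vector $Q_{ip}$, so $(Q_i)_{kp} = Q_{ipk}$, which is \eqref{eq:indexswap}. Computing $(Q_i^\dag Q_i)_{pq} = \sum_k \overline{(Q_i)_{kp}}\,(Q_i)_{kq} = \sum_k Q_{ipk}^* Q_{iqk}$ shows that \eqref{eq:qls1} is precisely the statement $Q_i^\dag Q_i = \I$; as $Q_i$ is square, this is equivalent to $Q_i$ being unitary.

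There is no serious obstacle here: the proof is a direct translation through the definitions. The only points needing care are the observation that orthonormality of $n$ vectors in $\C^n$ already forces them to be a basis, and the bookkeeping of the ordering of the final two indices and the placement of the complex conjugates, as flagged by \eqref{eq:indexswap}.
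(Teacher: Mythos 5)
Your proposal is correct and follows essentially the same route as the paper: expand the inner products of row and column entries in the computational basis to get \eqref{eq:qls1} and \eqref{eq:qls2}, and identify \eqref{eq:qls1} with $Q_i^\dag \circ Q_i = \mathbb{I}_n$ via the index convention \eqref{eq:indexswap}. The only difference is that you spell out the detail (including the dimension-count remark that $n$ orthonormal vectors in $\C^n$ automatically span) which the paper's proof simply asserts.
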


\begin{proof}
Equations~\eqref{eq:qls1} and~\eqref{eq:qls2} are exactly the condition that the rows and columns, respectively, of the quantum Latin square form orthonormal bases. Unitarity of $Q_i$ means precisely $(Q ^\dag _i \circ Q_i ^\pdag)_{pq} = \delta _{pq}$, which expands to $\sum _j (Q_i ^\dag)_{pj} ^\pdag (Q_i ^\pdag) _{jq} ^\pdag = \sum_j Q ^* _{ipj} Q _{iqj} ^\pdag = \delta _{pq}$. (Recall that for an operator $Q$ on a finite-dimensional Hilbert space, $Q \circ Q^\dag = \I_n$ if and only if $Q ^\dag \circ Q = \I_n$.)
\end{proof}

\noindent
The condition~\eqref{eq:qls2} equivalently says that the matrices formed by the \textit{columns} of the Latin square are unitary, but this is not a fact that we will need directly.

There are certain trivial ways to transform a quantum Latin square into a different quantum Latin square, which we use to define a notion of equivalence.
\begin{definition}
\label{def:qlsequiv}
Two quantum Latin squares are \textit{equivalent} when one can be obtained from the other by permuting rows and columns, multiplying rows and columns by unit complex numbers, and applying a fixed unitary to every element. Algebraically, quantum Latin squares $Q$ and $Q'$ are equivalent when there exists some unitary $U$, diagonal unitary $D$, permutation matrix $P$,  permutation $\phi$, and a family of unit complex numbers $c_j$, such that the following holds:
\begin{equation}
\label{eqn:qlsequiv}
Q'_j = c_j U \circ Q_{\phi(j)} \circ P \circ D
\end{equation}

\end{definition}

We now give the standard definition of a Hadamard matrix, as a square matrix with entries of absolute value 1 which is proportional to a unitary matrix. 
\begin{definition}[See \cite{hadamard}, Definition~2.1]
A \textit{Hadamard matrix of order n} is an $n$-by-$n$ matrix $H$ with the following properties for all $i,j$, which we write in both matrix and index form:
\begin{align}
\label{had1}
|H_{ij}| &=1
&
H_{ij} ^\pdag H_{ij} ^* &= 1
\\
\label{had2}
H \circ H^\dag &= n \,\mathbb{I}_n
& \textstyle \sum_p H_{ip} ^\pdag H^* _{jp} &= n \, \delta _{ij}
\\
\label{had3}
H ^\dag \circ H &= n \,\mathbb{I}_n
& \textstyle \sum_p H ^*_{pi} H _{pj} ^\pdag &= n \, \delta _{ij}
\end{align}
\end{definition} 

\ignore{We illustrate this with an example:
\begin{equation}
\diag\left(
\begin{pmatrix}
   1 & 1 & 1& 1 \\
   1 & i & -1 & -i \\
   1 & -1 & 1 & -1 \\
   1 & -i & -1 & i
\end{pmatrix}
,1 \right)
=
\begin{pmatrix}
   1&0&0&0
   \\
   0&i&0&0
   \\
   0&0&-1&0
   \\
   0&0&0&-i
\end{pmatrix}
\end{equation}}
\begin{definition}[See~\cite{werner2001all}, Section 4]
\label{def:equivhad}
Two Hadamard matrices are \textit{equivalent} when one can be obtained from the other by permuting rows and columns, and multiplying rows and columns by unit complex numbers. Algebraically, $H,H'$ are equivalent if there exist $P_1,P_2$ permutation matrices and $D_1,D_2$ unitary diagonal matrices such that:
\begin{equation}
H'=D_1 \circ P_1 \circ H \circ P_2 \circ D_2 
\end{equation}
\end{definition}

We now give the construction of a quantum Latin square from a Hadamard matrix.

\begin{definition}
For a square matrix $M$, let $\diag(M,i)$ be the diagonal matrix whose diagonal entries are given by the $i$th row of $M$:
\begin{equation}
\label{eq:diag}
\diag(M,i)_{jk} := \delta _{jk} M_{ij}
\end{equation}
\end{definition}

\begin{definition}
\label{def:hadqls}
For a Hadamard matrix $H$ of order $n$, its \textit{associated quantum Latin square} $Q_H$ of order $n$ is defined as follows:
\begin{equation}
\label{eq:hadqls}
(Q_H)_j:= \textstyle \frac{1}{n}H \circ \diag(H,j)^\dag \circ \textstyle H^{\dag}
\end{equation}
We will refer to a quantum Latin square constructed in this way as a \textit{Hadamard quantum Latin square}.
\end{definition}
\begin{theorem}
\label{lem:MUBQLS}
The associated quantum Latin square construction is correct.
\end{theorem}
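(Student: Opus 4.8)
The plan is to verify the two conditions of Lemma~\ref{lem:qls12} for the family of numbers $(Q_H)_{ijk}$ determined by the matrices in \eqref{eq:hadqls}: unitarity of each $(Q_H)_j$ (condition \eqref{eq:qls1}, orthonormality of rows) and condition \eqref{eq:qls2} (orthonormality of columns).

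For \eqref{eq:qls1} I would avoid indices entirely. By \eqref{had1} the diagonal matrix $\diag(H,j)$ has diagonal entries of unit modulus, hence is a (diagonal) unitary; and by \eqref{had2}--\eqref{had3} the matrix $\tfrac{1}{\sqrt n}H$ is unitary. Rewriting \eqref{eq:hadqls} as $(Q_H)_j = \big(\tfrac{1}{\sqrt n}H\big)\circ \diag(H,j)^\dagger\circ\big(\tfrac{1}{\sqrt n}H\big)^\dagger$ exhibits $(Q_H)_j$ as a composite of three unitaries, hence unitary, which is exactly \eqref{eq:qls1} by Lemma~\ref{lem:qls12}.

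For \eqref{eq:qls2} my preferred route is to observe that the array $Q_H$ is \emph{symmetric}. Expanding the matrix product in \eqref{eq:hadqls} together with \eqref{eq:diag} gives, for the $(k,i)$-entry of $(Q_H)_j$,
\begin{equation*}
\big((Q_H)_j\big)_{ki} = \tfrac1n \textstyle\sum_a H_{ka}\,\overline{H_{ja}}\,\overline{H_{ia}},
\end{equation*}
and by the index convention of Definition~\ref{def:notation} (cf.\ the swap recorded in \eqref{eq:indexswap}) this quantity is precisely $(Q_H)_{jik}$. The right-hand side is manifestly invariant under exchanging $i$ and $j$, so $(Q_H)_{ijk} = (Q_H)_{jik}$ for all $i,j,k$. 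Condition \eqref{eq:qls2} then follows immediately: $\sum_j (Q_H)_{pij}^*(Q_H)_{qij} = \sum_j (Q_H)_{ipj}^*(Q_H)_{iqj} = \delta_{pq}$, the last equality being \eqref{eq:qls1}. Alternatively, one can prove \eqref{eq:qls2} by brute force: substitute the displayed formula for $(Q_H)_{pij}$ into $\sum_j (Q_H)_{pij}^*(Q_H)_{qij}$, use \eqref{had3} to carry out the sum over $j$ (producing $n\,\delta_{ab}$), and collapse the remaining sum using \eqref{had1} and \eqref{had2}.

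I do not expect a genuine obstacle: the content is a bounded computation. The one place to be careful is the index bookkeeping — reading the diagonal entries off the correct ($j$th) row of $H$ in \eqref{eq:diag}, and correctly translating between the matrix-entry convention and the three-index convention $(Q_H)_{ijk}$, where the last two indices are transposed relative to matrix position as in \eqref{eq:indexswap}. Mishandling this is the only realistic way the argument could go wrong.
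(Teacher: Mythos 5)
Your proposal is correct. The unitarity half (condition \eqref{eq:qls1}) is identical to the paper's argument: $(Q_H)_j$ is a composite of the unitaries $\tfrac{1}{\sqrt n}H$, $\diag(H,j)^\dag$ and $\tfrac{1}{\sqrt n}H^\dag$. For condition \eqref{eq:qls2} your preferred route differs genuinely from the paper's. The paper substitutes the index formula $(Q_H)_{qij} = \tfrac1n\sum_r H_{jr}H_{qr}^*H_{ir}^*$ directly into $\sum_j (Q_H)^*_{pij}(Q_H)_{qij}$ and grinds through the double sum using \eqref{had3}, then \eqref{had1} and \eqref{had2} --- exactly the ``brute force'' alternative you sketch. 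You instead read off from the same index formula that the array is symmetric, $(Q_H)_{ijk}=(Q_H)_{jik}$, so that column orthonormality reduces to the already-established row orthonormality; your index bookkeeping via \eqref{eq:indexswap} checks out, and the reduction $\sum_j (Q_H)_{pij}^*(Q_H)_{qij}=\sum_j (Q_H)_{ipj}^*(Q_H)_{iqj}=\delta_{pq}$ is valid. The symmetry route is shorter and isolates a structural fact the paper never states --- that every Hadamard quantum Latin square is a symmetric array, which is of independent interest (e.g.\ it is an invariant one could try to exploit in non-equivalence arguments like Proposition~\ref{prop:MnotHad}); the paper's computation is more self-contained in that it never needs to invoke \eqref{eq:qls1} when proving \eqref{eq:qls2}, but buys nothing else. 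Either version is a complete proof once the one displayed entry computation is written out.
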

\begin{proof}
To establish property~\eqref{eq:qls1}, we note that $(Q_H)_j$ is the composite of three unitary matrices, and is therefore unitary.\ignore{To establish property~\eqref{eq:qls2}, we note that the unitary matrix $\frac{1}{\sqrt{n}}H$ diagonalizes $(Q_H)_j$ for all $j$:
\begin{align}
\textstyle
\big(\frac{1}{\sqrt{n}}H^{\dag} \big)
\circ (Q_H)_j \circ
\big( \frac{1}{\sqrt{n}}H \big)
\super {\eqref{eq:hadqls}} =
\textstyle \frac{1}{n^2}H^{\dag} \circ H \circ \diag(H,j)^\dag \circ H^\dag \circ H \super{\eqref{had2}}
= \diag (H,j)^\dag
\end{align}
So the columns of $\frac{1}{\sqrt{n}} H$ are the eigenvectors of $(Q_H)_j$ and the entries of $\diag(H,j)^{\dag}$ are its eigenvalues. Let $\ket{a_m}$ be the $m$th column of $\frac{1}{\sqrt{n}} H$ and let $\ket{b_j}^T=(b_{0j},...,b_{n-1,j})$ be the $j$th row of $\frac{1}{\sqrt{n}} H$. Then  the spectral decomposition of $(Q_H)_j$ is $(Q_H)_j =\sum_m b_{mj}^* \ket{a_m}\bra{a_m}$. Thus:
\begin{align}
\label{eq:1}(Q_H)_{pk} &=\sum_n b_{np}^* \ket{a_n} \braket{a_n|k}
\\ 
\label{eq:2}\text{and } 
(Q_H)_{qk} &=\sum_m b_{mq}^* \ket{a_m} \braket{a_m|k}\end{align}

And finally:
\begin{align*}
\sum _j Q _{pij} ^* Q_{qij} ^\pdag =\braket{(Q_H)_{pk,}(Q_H)_{qk}}\super{~\eqref{eq:1}\eqref{eq:2}}=\sum_{mn} b_{np}^{\pdag}b_{mq}^* \braket{k|a_n}\braket{a_n|a_m} \braket{a_m|k} 
\\ =
\sum_{m} b_{mp}^{\pdag}b_{mq}^* \braket{k|a_m} \braket{a_m|k}
=
\sum_{m} b_{mp} ^{\pdag}b_{mq}^* \braket{k|k}
=
\sum_{m} b_{mp}b_{mq}^*=
\braket{b_q|b_p}
\super{\eqref{lem:Qiunitary}}=\delta_{pq}
\end{align*}
as required.}
To verify~\eqref{eq:qls2}, we write expression~\eqref{eq:hadqls} in index form:
\begin{align}
\nonumber
&(Q_H) _{qij}
\super {\eqref{eq:indexswap}} =
((Q_H) _q) _{ji}
\super {\eqref{eq:hadqls}} =
\textstyle
\frac 1 n \sum _{rs} H_{jr}^\pdag \diag(H,q) ^\dag _{rs} H^\dag _{si}
\\
\label{eq:qhindex}
\,\,\,
&\super {\eqref{eq:diag}} = \textstyle
\frac 1 n \sum _{rs} H_{jr}^\pdag H^* _{qr} \delta _{rs}^\pdag H^* _{is}
=
\textstyle \frac 1 n \sum_r H_{jr} ^\pdag H_{qr} ^* H_{ir} ^*
\intertext{We then perform the following calculation:}
\nonumber
&\textstyle \sum _j (Q_H) ^* _{pij} (Q_H) _{qij} ^\pdag
\super {\eqref{eq:qhindex}} =
\textstyle
\frac 1 {n^2} \sum_j \big( \sum_r H^* _{jr} H _{pr} ^\pdag H_{ir} ^\pdag \big) \big( \sum_s H_{js} ^\pdag H_{qs} ^* H_{is} ^* \big)
\\
\nonumber
&\,\,\,
=
\textstyle \frac 1 {n^2} \sum _{rs} \big( \sum_j H_{jr} ^* H_{js} ^\pdag \big) H_{pr} ^\pdag H_{ir} ^\pdag H_{qs} ^* H_{is} ^*
\super {\eqref{had3}} =
\frac 1 n \sum_{rs} \delta _{rs} ^\pdag H_{pr} ^\pdag H_{ir} ^\pdag H_{qs} ^* H_{is} ^*
\\
&\,\,\,= \textstyle \frac 1 n \sum_r H_{pr} ^\pdag H_{qr} ^*  H_{ir} ^\pdag H_{ir} ^*
\super {\eqref{had1}} =
\frac 1 n \sum_r H_{pr} ^\pdag H_{qr}^*
\super {\eqref{had2}} =
\delta _{pq} ^\pdag 
\end{align}
In the second equality here, the sum is being reorganized.
\end{proof}
We now establish a lemma which we will use to prove Lemma~\ref{lem:MnotHad} and later Proposition~\ref{equivhadUEB1}.
\begin{lemma}
\label{lem:diag}
Let $p$ be the permutation associated with the permutation matrix $P$ such that $P=\sum_k\ketbra{p(k)}{k}$ and $D$ be a diagonal unitary. Then the following equations hold:
\begin{align}
\label{eqn:diag1}
\diag(P \circ H,i)
&=
\diag(H,p(i))=\diag(H_{p(i),0},...,H_{p(i),n-1})
\\
\label{eqn:diag2}
\diag(H \circ P,i)
&=
\diag(H_{i,p(0)} , ... , H_{i,p(n-1)})
\\
\label{eqn:diag3}
\diag(D \circ H,i)
&=
D_{ii} \, \diag(H,i)
\\
\label{eqn:diag4}
\diag(H \circ D,i)
&=
D \, \circ \diag(H,i)=\diag(H,i)\circ D
\end{align}
\end{lemma}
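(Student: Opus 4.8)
The plan is to verify each of the four identities by a direct computation in index notation, using the definition $\diag(M,i)_{jk} = \delta_{jk} M_{ij}$ from \eqref{eq:diag}, the defining relation $P = \sum_k \ketbra{p(k)}{k}$ of the permutation matrix, and the elementary fact that a diagonal matrix $D$ satisfies $D_{jk} = \delta_{jk} D_{jj}$. Since $\diag(M,i)$ depends only on the $i$th row of $M$, the essential step in each case is simply to identify the $i$th row of the relevant product.

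For \eqref{eqn:diag2} I would expand $(H \circ P)_{ij} = \sum_k H_{ik} P_{kj} = \sum_k H_{ik}\,\delta_{k,p(j)} = H_{i,p(j)}$, so that the $i$th row of $H \circ P$ is $(H_{i,p(0)}, \dots, H_{i,p(n-1)})$, which is exactly the claim. For \eqref{eqn:diag1} the analogous expansion of $(P \circ H)_{ij} = \sum_k P_{ik} H_{kj}$ shows that the $i$th row of $P \circ H$ is a row of $H$ obtained by applying the permutation to $i$, giving $\diag(P \circ H, i) = \diag(H, p(i))$; I would pin down the action of $P$ on the standard basis at the outset so that the index direction matches the statement. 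For \eqref{eqn:diag3}, since $D$ is diagonal we have $(D \circ H)_{ij} = \sum_k D_{ik} H_{kj} = D_{ii} H_{ij}$, so the $i$th row of $D \circ H$ is $D_{ii}$ times the $i$th row of $H$, hence $\diag(D \circ H, i)_{jk} = \delta_{jk} D_{ii} H_{ij} = D_{ii}\,\diag(H,i)_{jk}$. For \eqref{eqn:diag4}, again using diagonality, $(H \circ D)_{ij} = \sum_k H_{ik} D_{kj} = H_{ij} D_{jj}$, so $\diag(H \circ D, i)_{jk} = \delta_{jk} H_{ij} D_{jj}$, and comparing this with $(D \circ \diag(H,i))_{jk} = D_{jj}\,\delta_{jk} H_{ij}$ and $(\diag(H,i) \circ D)_{jk} = \delta_{jk} H_{ij} D_{jj}$ yields the three-way equality; the final equality is also immediate from the fact that diagonal matrices commute.

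I do not expect any real obstacle: each identity collapses to a single line of Kronecker-delta manipulation, and the whole lemma is bookkeeping. The only point requiring care is the permutation convention in \eqref{eqn:diag1} and \eqref{eqn:diag2} — namely whether the $i$th row of $P \circ H$ is the $p(i)$th or the $p^{-1}(i)$th row of $H$ — so I would state explicitly how $P$ permutes the standard basis before beginning, and read \eqref{eqn:diag1} with that convention.
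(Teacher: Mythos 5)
Your proposal is correct and is exactly the ``straightforward calculation'' the paper leaves implicit: each identity reduces to a one-line Kronecker-delta computation from $\diag(M,i)_{jk}=\delta_{jk}M_{ij}$, which is what you do. You are also right to flag the permutation convention: with $P=\sum_k\ketbra{p(k)}{k}$ one has $P_{ik}=\delta_{i,p(k)}$, so the $i$th row of $P\circ H$ is the $p^{-1}(i)$th row of $H$ and \eqref{eqn:diag1} as printed holds only after replacing $p$ by $p^{-1}$ --- a harmless relabelling for the lemma's later uses, but worth stating explicitly as you propose.
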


\begin{proof}
Straightforward calculation.
\end{proof}
\begin{lemma}
\label{lem:MnotHad}
Equivalent Hadamards give rise to equivalent quantum Latin squares.
\end{lemma}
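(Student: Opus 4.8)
The plan is to start from two equivalent Hadamard matrices $H$ and $H'$, so that by Definition~\ref{def:equivhad} there are permutation matrices $P_1,P_2$ and diagonal unitaries $D_1,D_2$ with $H'=D_1\circ P_1\circ H\circ P_2\circ D_2$, and then to compute the associated quantum Latin square $Q_{H'}$ using formula~\eqref{eq:hadqls}, showing that the result is equivalent to $Q_H$ in the sense of Definition~\ref{def:qlsequiv}, i.e.\ of the form $(Q_{H'})_j = c_j\, U\circ (Q_H)_{\phi(j)}\circ P\circ D$.

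First I would substitute the expression for $H'$ into $(Q_{H'})_j = \tfrac1n H'\circ\diag(H',j)^\dag\circ H'^\dag$. The outer factors become $H' = D_1 P_1 H P_2 D_2$ and $H'^\dag = D_2^\dag P_2^\dag H^\dag P_1^\dag D_1^\dag$, and the crucial step is to rewrite $\diag(H',j)$ using Lemma~\ref{lem:diag}. Applying~\eqref{eqn:diag3}, \eqref{eqn:diag1}, \eqref{eqn:diag2}, \eqref{eqn:diag4} in turn to $H' = D_1\circ(P_1\circ H\circ P_2)\circ D_2$ gives $\diag(H',j) = (D_1)_{jj}\,D_2\circ\diag(P_1 H P_2, j) = (D_1)_{jj}\,D_2\circ\diag(H, p_1(j))\circ P_2'$ for an appropriate permutation matrix, where $p_1$ is the permutation of $P_1$; taking $\dag$ turns this into a scalar times $\diag(H,p_1(j))^\dag$ conjugated by diagonal/permutation unitaries. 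Here the permutation $\phi$ of the claimed equivalence will be $\phi = p_1^{-1}$ (or $p_1$, depending on index conventions), the scalar $c_j$ will absorb $(D_1)_{jj}$ together with a unit-modulus constant coming from the diagonal entries, and the unitary $U$ will be essentially $D_1\circ P_1$.

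The main bookkeeping is that the inner $P_2, D_2$ factors coming from $\diag(H',j)^\dag$ must cancel against, or combine cleanly with, the $P_2, D_2$ sitting inside $H'$ and $H'^\dag$; I would track these carefully, using that conjugates of permutation matrices are their transposes/inverses and that diagonal unitaries commute past $\diag$ by~\eqref{eqn:diag4}. After the dust settles one is left with $(Q_{H'})_j = c_j (D_1 P_1)\circ\big(\tfrac1n H\circ\diag(H,\phi(j))^\dag\circ H^\dag\big)\circ(\text{permutation})\circ(\text{diagonal}) = c_j\, U\circ (Q_H)_{\phi(j)}\circ P\circ D$, which is exactly~\eqref{eqn:qlsequiv}.

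The step I expect to be the genuine obstacle is pinning down exactly which permutation and which diagonal unitary land on the right-hand (acting-on-columns) side versus being absorbed into the global unitary $U$ or the scalars $c_j$ — in particular verifying that the phases extracted from $\diag(H',j)^\dag$ combine into a single well-defined $c_j$ independent of the column index, and that the leftover permutation/diagonal on the right is the same for every $j$ (as Definition~\ref{def:qlsequiv} demands a single $P$, $D$). This amounts to an honest but routine application of Lemma~\ref{lem:diag} together with elementary identities for permutation and diagonal matrices, so I would present it as a direct computation rather than anything requiring a new idea.
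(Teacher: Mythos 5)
Your plan is sound and rests on the same two ingredients as the paper's proof, namely Lemma~\ref{lem:diag} and Lemma~\ref{lem:DPPD}; the difference is purely organisational. The paper treats the elementary cases $H'=P\circ H$, $H'=H\circ P$ and $H'=D_1\circ H\circ D_2$ separately and (implicitly) chains them together, which quietly relies on transitivity of the equivalence of Definition~\ref{def:qlsequiv} — itself a small argument via Lemma~\ref{lem:DPPD}. Your one-shot computation on the general form $H'=D_1\circ P_1\circ H\circ P_2\circ D_2$ avoids that reliance, at the cost of heavier bookkeeping. One correction to the bookkeeping you should make explicit: the $P_2,D_2$ factors do \emph{not} cancel completely. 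Carrying out your own recipe, $\diag(H',j)=(D_1)_{jj}\,D_2\circ P_2^{-1}\circ\diag(H,p_1(j))\circ P_2$, and after inserting this into $\tfrac1n H'\diag(H',j)^\dag H'^{\dag}$ the surviving middle factor is $\diag(H,p_1(j))^\dag\circ\widehat{D}_2^{\dag}$ with $\widehat{D}_2=P_2\circ D_2\circ P_2^{-1}$, so a $j$-independent diagonal unitary remains trapped between $\diag(H,p_1(j))^\dag$ and $H^\dag$. It cannot be pushed to the right-hand side of~\eqref{eqn:qlsequiv}, since $\tfrac1n H\widehat{D}_2^{\dag}H^\dag$ is unitary but not of the form $P\circ D$; instead you must commute it past the diagonal $\diag(H,p_1(j))^\dag$, insert $\tfrac1n H^\dag H=\mathbb I_n$, and absorb $\tfrac1n H\widehat{D}_2^{\dag}H^\dag$ into the global unitary, giving $U=D_1\circ P_1\circ\tfrac1n H\widehat{D}_2^{\dag}H^\dag$ rather than the "essentially $D_1\circ P_1$" you predict, with $c_j=\overline{(D_1)_{jj}}$, $\phi=p_1$, $P=P_1^{-1}$, $D=D_1^\dag$. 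This is exactly the manoeuvre the paper performs when it passes through the intermediate form $\tfrac1{\sqrt n}\diag(H,j)^\dag\circ H^\dag$ in its third case. With that point fixed, your argument closes.
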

\begin{proof}
We will prove equivalence on a case-by-case basis. 
Suppose $H' = P\circ H$. Then we have the following, where we use the fact that $P ^{-1} = P^\dag = P^T$:
\begin{align}
(Q_{H'})_{j}
&\super{\eqref{eq:hadqls}}=
\textstyle\frac{1}{n}P \circ H \circ \diag(P \circ H,j)^{\dag} \circ H^{\dag}  \circ P^{-1}\nonumber
\\
& \super {\eqref{eqn:diag1}} =
\textstyle\frac{1}{n}P \circ H \circ \diag(H,p(j))^{\dag} \circ H^{\dag}  \circ P^{-1}\nonumber
\\
& \super{\eqref{eqn:qlsequiv}} \sim
\textstyle\frac{1}{n}H \circ \diag(H,j)^{\dag} \circ H^{\dag}\super{\eqref{eq:hadqls}}=(Q_H)_j \nonumber 
\intertext{We now consider $H' = H \circ P$:}
(Q_{H'})_{j}
&\super{\eqref{eq:hadqls}}= \textstyle\frac{1}{n}H \circ P \circ \diag(H \circ P,j)^{\dag} \circ P^{-1} \circ H^{\dag}\nonumber
\\
& \super {\ensuremath{\eqref{eqn:diag2}}} {=}
\textstyle\frac{1}{n}H \circ P \circ \diag(H_{j,p(0)},...,H_{j,p(n-1)})^{\dag} \circ P^{-1} \circ H^{\dag}\nonumber 
\\
& \super{\eqref{eq:DPPD}} = 
\textstyle\frac{1}{n}H \circ P \circ P^{-1} \circ \diag(H,j)^{\dag} \circ H^{\dag}\nonumber
\\
&= \textstyle\frac{1}{n}H \circ \diag(H,j)^{\dag} \circ H^{\dag}\super{\eqref{eq:hadqls}}=(Q_H)_j \nonumber
\intertext{Finally, suppose $H' = D_1 \circ H \circ D_2$, with $D_1 = \diag(c_1, \ldots, c_n)$, where $|c_i| = 1$. Then we calculate as follows:}
 (Q_{H'})_{j}
 &\super{\eqref{eq:hadqls}}=
\textstyle\frac{1}{n}D_1 \circ H \circ D_2 \circ \diag(D_1 \circ H \circ D_2,j)^{\dag} D_2^{\dag}\circ H^\dag \circ D_1^{\dag}\nonumber
\\
& \super {\eqref{eqn:diag3}} = 
\textstyle\frac{1}{n}D_1 \circ H \circ D_2 \circ c_j \diag(H\circ D_2,j)^{\dag} \circ D_2^{\dag} \circ H^\dag \circ D_1 \nonumber 
\\
& \super {\eqref{eqn:diag4}} = 
\textstyle\frac{1}{n}D_1 \circ H \circ D_2 \circ c_j \diag(H,j)^{\dag}D_2 \circ D_2^{\dag} \circ H^\dag \circ D_1 \nonumber
\\
& = \textstyle\frac{1}{n}D_1 \circ H \circ D_2 \circ c_j \diag(H,j)^{\dag}\circ H^\dag \circ D_1 \nonumber
\\
 & \super{\eqref{eqn:qlsequiv}} \sim \textstyle \frac{1}{\sqrt{n}}\diag(H,j)^{\dag} \circ H^\dag \nonumber
\\
& \super{\eqref{eqn:qlsequiv}} \sim \textstyle \frac{1}{n}H \circ\diag(H,j)^{\dag} \circ H^\dag \super{\eqref{eq:hadqls}}= (Q_H)_j \nonumber
\end{align}
This completes the proof.
\end{proof}

Finally, we prove that our example quantum Latin square does not arise in this way, even up to equivalence. This makes use of some results that we prove later in the paper.
\begin{proposition}
\label{prop:MnotHad}
The quantum Latin square given in Example~\ref{ex:qls} is not equivalent to a quantum Latin square constructed from a Hadamard.
\begin{proof}
Let $H_\alpha$ be the family of Hadamard matrices as defined in equation~\eqref{eq:4had}, let  $(Q_{H_\alpha})_j:= \textstyle \frac{1}{n}H_\alpha \circ \diag(H_\alpha,j)^\dag \circ \textstyle H_\alpha^{\dag}$ be the associated quantum Latin squares, and let $Q$ be the quantum Latin square of Example~\ref{ex:qls}. By Lemma~\ref{lem:MnotHad} and Proposition~\ref{equivhadUEB2}, any quantum Latin square arising from a Hadamard matrix in the manner of Definition~\ref{def:hadqls} is equivalent to $Q_{H_\alpha}$ for some value of $\alpha$. 

For a contradiction, suppose that $Q$ and $Q_{H_\alpha}$ are equivalent in the manner of Definition~\ref{def:qlsequiv}, for some fixed value of $\alpha$. So there exists some unitary matrix $U$, diagonal unitary matrix $D$, permutation matrix $P$, permutation $\phi$, and a family of unit complex numbers $c_j$, such that the following holds:
\begin{equation*}
(Q_{H_\alpha})_j = c_j U \circ Q_{\phi(j)} \circ P \circ D
\end{equation*}
Note that the composite $P \circ D$ is unitary; so the families of matrices $(Q_{H_\alpha})_j$ and $Q_j$, which are unitary by Lemma~\ref{lem:qls12}, are equivalent families in the sense of Definition~\ref{def:equivUEB}.

The family $(Q_{H _\alpha})_j$ are simultaneously monomializable, by the matrix $Y$ defined in equation~\eqref{Y}. (This follows from Theorem~\ref{thm:d4mubmonomial}, in which we show that the members of $\F_\alpha$, which include the $(Q_{H_\alpha})_j$ as a subset, are simultaneously monomializable.)
So all together, the family of matrices $Q_j$ contains the identity, and is equivalent in the sense of Definition~\ref{def:equivUEB} to a monomial family. So by Proposition~\ref{lem:UEBmonomial}, the family $Q_j$ is simultaneously monomializable, and thus by Proposition~\ref{prop:simmon}, their 12th powers must all commute. But as established in the proof of Theorem~\ref{thm:Mnotmon}, the 12th powers of $Q_1 = \mathcal M_{01}$ and $Q_2 = \mathcal M _{02}$ do not commute. This gives us our contradiction.\end{proof}
\end{proposition}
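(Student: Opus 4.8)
The plan is to reduce the statement to a single one-parameter family of Hadamard matrices and then obstruct equivalence by an invariant that is insensitive to the moves allowed in Definition~\ref{def:qlsequiv}. First I would invoke the classification of complex Hadamard matrices of order $4$: every such matrix is equivalent, in the sense of Definition~\ref{def:equivhad}, to a member $H_\alpha$ of the standard one-parameter family~\eqref{eq:4had} (established later as Proposition~\ref{equivhadUEB2}). Since Lemma~\ref{lem:MnotHad} says that equivalent Hadamards yield equivalent quantum Latin squares, it is enough to prove that the quantum Latin square $Q$ of Example~\ref{ex:qls} is not equivalent to the associated quantum Latin square $Q_{H_\alpha}$ for any value of $\alpha$.

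Next I would translate everything into the language of families of unitaries. Suppose for contradiction that $Q \sim Q_{H_\alpha}$ for some fixed $\alpha$; unfolding Definition~\ref{def:qlsequiv} produces a unitary $U$, a diagonal unitary $D$, a permutation matrix $P$, a permutation $\phi$ and unit scalars $c_j$ with $(Q_{H_\alpha})_j = c_j\, U \circ Q_{\phi(j)} \circ P \circ D$. The matrices $Q_j$ and $(Q_{H_\alpha})_j$ are all unitary by Lemma~\ref{lem:qls12} and $P \circ D$ is unitary, so this is exactly an equivalence of the two families in the sense of UEB-equivalence (Definition~\ref{def:equivUEB}). Moreover the zeroth row of $Q$ is $\ket0,\ket1,\ket2,\ket3$, so $Q_0 = \I_4$ and the family $\{Q_j\}$ contains the identity.

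The obstruction is simultaneous monomializability. The family $\{(Q_{H_\alpha})_j\}$ is simultaneously monomializable, via conjugation by the matrix $Y$ of~\eqref{Y}; this can be extracted from the later Theorem~\ref{thm:d4mubmonomial} on the larger family $\F_\alpha$, of which $\{(Q_{H_\alpha})_j\}$ is a subset. A family of unitaries that contains the identity and is UEB-equivalent to a simultaneously monomializable family is itself simultaneously monomializable (Proposition~\ref{lem:UEBmonomial}), so $\{Q_j\}$ would be simultaneously monomializable. But then Proposition~\ref{prop:simmon} forces the twelfth powers of the $Q_j$ to pairwise commute, whereas the proof of Theorem~\ref{thm:Mnotmon} shows that the twelfth powers of $Q_1 = \mathcal M_{01}$ and $Q_2 = \mathcal M_{02}$ do not commute. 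This contradiction finishes the argument.

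The main obstacle is the input that the Hadamard-derived family $\{(Q_{H_\alpha})_j\}$ is simultaneously monomializable for every $\alpha$ with a single conjugator $Y$: this depends on the explicit order-$4$ Hadamard classification and on the structural analysis of $\F_\alpha$, and it is the reason the argument has to be routed through the later sections rather than run directly from Example~\ref{ex:qls}. A secondary subtlety is the choice of invariant: plain commutation of the $Q_j$ is not preserved by the allowed equivalences, but ``the twelfth powers commute after simultaneous monomialization'' is, so one must both verify that monomial families have this property and check, by an explicit finite computation, that $\mathcal M$ does not.
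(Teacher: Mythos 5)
Your proposal is correct and follows essentially the same route as the paper's own proof: reduce to the family $H_\alpha$ via the order-$4$ Hadamard classification and Lemma~\ref{lem:MnotHad}, recast the quantum Latin square equivalence as a UEB-style equivalence of unitary families, monomialize the Hadamard side by $Y$, and obstruct via the non-commuting twelfth powers of $\mathcal M_{01}$ and $\mathcal M_{02}$. Your explicit remark that $Q_0 = \I_4$ (so the family contains the identity, as required by Proposition~\ref{lem:UEBmonomial}) is a detail the paper leaves implicit, but otherwise the arguments coincide.
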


\section{Unitary error bases from    quantum Latin squares}

In this section we define unitary error bases, and present our new \textit{quantum shift-and-multiply} construction, which produces a unitary error basis from a quantum Latin square equipped with a family of Hadamard matrices.
We then introduce an example UEB $\M$, which will play an important role in later sections where we show that it cannot arise from the shift-and-multiply, Hadamard or algebraic methods, even up to equivalence. 

We begin with the definition of unitary error basis. As remarked in the introduction, these structures play a central role in quantum computation.
\begin{definition}[See~\cite{klapp}, Section 1]
For a Hilbert space $H$ of dimension $n$, a \textit{unitary error basis} (or \textit{unitary operator basis}) is a family of $n^2$ unitary matrices $U_{ij}: H \to H$ which form an orthogonal basis:
\begin{equation}
\Tr (U^{\dag}_{ij} \circ U ^{\vphantom{\dag}}_{i'j'}) = \delta _{ii'} \delta _{jj'} n
\end{equation}
\end{definition}

\noindent
There is a standard notion of equivalence of unitary error bases, which we recall here.

\begin{definition}[See~\cite{klapp}, Section~2]
\label{def:equivUEB}
Two families of unitary matrices $\mathcal{A}$, $\mathcal B$ are \textit{equivalent} if there are unitary matrices $U$ and $V$, such that for any element $A \in \mathcal A$, there is an element $B \in \mathcal B$ and a unit complex number $c$ such that the following holds:
\begin{equation}
\label{eq:UEBequiv}
B = c \, U \circ A \circ V
\end{equation}
\end{definition}

\noindent
The following technical lemma will be useful later.

\begin{lemma}
\label{lem:composezerodiagonal}
Let $D$ be a diagonal matrix, and $A$ be a square matrix which is zero along the main diagonal, such that $D$ and $A$ are composable. Then $D \circ A$ is zero along the main diagonal.
\end{lemma}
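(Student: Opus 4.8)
The statement is elementary, so the plan is simply to compute the main-diagonal entries of $D \circ A$ directly in index notation. Write $D$ as a diagonal matrix, so that $D_{k\ell} = \delta_{k\ell} d_k$ for some scalars $d_k$, and let $A$ be the matrix with $A_{kk} = 0$ for all $k$. The composite has entries $(D \circ A)_{k\ell} = \sum_m D_{km} A_{m\ell} = \sum_m \delta_{km} d_k A_{m\ell} = d_k A_{k\ell}$.

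Specializing to the diagonal, $(D \circ A)_{kk} = d_k A_{kk} = d_k \cdot 0 = 0$ for every $k$, which is exactly the claim. The only thing worth noting is that composability of $D$ and $A$ guarantees the index ranges match up so the sum makes sense; since $D$ is diagonal it is in particular square, and $A$ being zero on its main diagonal presupposes $A$ is square of the same size, so there is nothing delicate here.

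I do not anticipate any obstacle: there is no nontrivial case analysis, no appeal to earlier results, and no subtlety beyond bookkeeping of indices. The one-line computation above suffices, and the write-up can be a single short paragraph reproducing it.
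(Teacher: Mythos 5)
Your computation is correct and is essentially identical to the paper's own proof, which also evaluates $(D \circ A)_{ii} = \sum_k D_{ik}A_{ki} = D_{ii}A_{ii} = 0$ using the diagonal form of $D$. Nothing to add.
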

\begin{proof}
We perform the following calculation of the diagonal elements of $D \circ A$:
\begin{equation}
\textstyle
(D \circ A) _{ii} = \sum _k D _{ik} A _{ki} = \sum_k \delta _{ik} D_{ii} A_{ki} = D_{ii} A_{ii} = 0
\end{equation}
 Here we apply the definition of matrix composition, the diagonal property of $D$, the properties of the sum, and the hypothesis that $A$ is zero along the main diagonal.
\end{proof}

We now define the main construction of focus in this paper. This construction is similar to Werner's shift-and-multiply method~\cite{werner2001all}, the difference being that ours is in terms of \textit{quantum} Latin squares. As usual, we take all indices in the range $0$ to $n-1$.
\begin{definition}[Quantum shift-and-multiply method]
\label{def:qsm}
Let $Q$ be a quantum Latin square of order $n$, and $H_j$ be a family of $n$ Hadamard matrices  of order $n$.  Then the associated \textit{quantum shift-and-multiply basis} has the following elements:
\begin{equation}
\label{eq:defqls}
S_{ij} := Q_j \circ \diag(H_j,i)
\end{equation}
\end{definition}
\noindent
In words, the $(i,j)$ entry of the quantum shift-and-multiply basis is the matrix given by the $j$th row of the quantum Latin square, composed with the diagonal matrix formed from the $i$th row of the $j$th Hadamard matrix.

We illustrate this with an example. This example will play a central role, as we will show in the remainder of the paper that it cannot be obtained, even up to equivalence, by any of the existing methods of unitary error basis construction.
\begin{example}
\label{ex:qlsueb}
The quantum shift-and-multiply basis $\mathcal M$ is constructed from the quantum Latin square of Example~\ref{ex:qls}, and from the following family of Hadamard matrices:
\begin{equation}
\label{eqn:Mhad}
H_0 = H_1 = H_2 = H_3 = 
\begin{pmatrix}
   1 & 1 & 1& 1 \\
   1 & i & -1 & -i \\
   1 & -1 & 1 & -1 \\
   1 & -i & -1 & i
\end{pmatrix}
\end{equation}
The resulting family of 16 matrices is listed in Appendix~\ref{sec:M}.
\end{example}

We now show that quantum shift-and-multiply bases are unitary error bases. This has similarities with Werner's original proof~\cite{werner2001all} for standard shift-and-multiply bases (see Section~\ref{sec:shiftandmultiply}), but our use of quantum Latin squares requires nontrivial extra ideas.
\begin{theorem}
\label{thm:main}
Quantum shift-and-multiply bases are unitary error bases.
\end{theorem}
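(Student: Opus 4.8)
The plan is to verify the two defining conditions of a unitary error basis for the family $S_{ij} = Q_j \circ \diag(H_j,i)$: first that each $S_{ij}$ is unitary, and second that the trace-orthogonality relation $\Tr(S_{ij}^\dag \circ S_{i'j'}) = \delta_{ii'}\delta_{jj'}n$ holds. Unitarity is immediate: $Q_j$ is unitary by Lemma~\ref{lem:qls12} (property~\eqref{eq:qls1}), and $\diag(H_j,i)$ has diagonal entries of modulus $1$ by property~\eqref{had1} of a Hadamard matrix, hence is a diagonal unitary; a composite of unitaries is unitary. So the bulk of the work is the orthogonality relation.

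For the orthogonality relation I would compute $\Tr(S_{ij}^\dag \circ S_{i'j'}) = \Tr\big(\diag(H_j,i)^\dag \circ Q_j^\dag \circ Q_{j'} \circ \diag(H_{j'},i')\big)$ and split into the cases $j = j'$ and $j \neq j'$. When $j = j'$, the middle factor $Q_j^\dag \circ Q_j$ collapses to $\I_n$ by unitarity, leaving $\Tr\big(\diag(H_j,i)^\dag \circ \diag(H_j,i')\big) = \sum_k \overline{(H_j)_{ik}}(H_j)_{i'k} = n\,\delta_{ii'}$ by property~\eqref{had2}. When $j \neq j'$, I need the trace to vanish. Writing everything in index form, $\Tr(S_{ij}^\dag \circ S_{i'j'}) = \sum_{k} \overline{(H_j)_{ik}}\,(H_{j'})_{i'k}\,\big(Q_j^\dag \circ Q_{j'}\big)_{kk}$ — i.e. only the diagonal of $Q_j^\dag \circ Q_{j'}$ enters, weighted by unit-modulus numbers. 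The key claim is therefore that $\sum_k (Q_j^\dag \circ Q_{j'})_{kk} = 0$ for $j \neq j'$, and more precisely that the weighted sum vanishes for every choice of $i, i'$. The cleanest route: observe that $(Q_j^\dag \circ Q_{j'})_{kk} = \langle (Q_j)_{\cdot k} \mid (Q_{j'})_{\cdot k}\rangle$ is the inner product of the $k$th columns of rows $j$ and $j'$ of the quantum Latin square, i.e. $\langle Q_{jk} \mid Q_{j'k}\rangle$, which by the column-orthonormality condition~\eqref{eq:qls2} (taken with the roles of indices matching the $k$th column) satisfies $\sum_k$ of nothing directly — rather, condition~\eqref{eq:qls2} says $\sum_m \overline{Q_{jmk}}\,Q_{j'mk} = \delta_{jj'}$, wait, that is the wrong contraction; the relevant fact is that for each fixed column index the entries down a column form an orthonormal basis, so $\langle Q_{jk}\mid Q_{j'k}\rangle = \delta_{jj'}$, hence each diagonal entry $(Q_j^\dag Q_{j'})_{kk}$ is itself $\delta_{jj'}$, making the whole weighted sum $\delta_{jj'}\sum_k \overline{(H_j)_{ik}}(H_{j'})_{i'k}$, which is $0$ when $j \neq j'$ outright.

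So the argument reduces to the identity $(Q_j^\dag \circ Q_{j'})_{kk} = \langle Q_{jk} \mid Q_{j'k}\rangle = \delta_{jj'}$, which I would justify by unwinding the index notation of Definition~\ref{def:notation} and~\eqref{eq:indexswap}: $(Q_j^\dag \circ Q_{j'})_{kk} = \sum_m (Q_j^\dag)_{km}(Q_{j'})_{mk} = \sum_m \overline{(Q_j)_{mk}}(Q_{j'})_{mk} = \sum_m \overline{Q_{jkm}}\,Q_{j'km}$, and then note this is exactly the inner product of the $k$th entries of rows $j$ and $j'$ — which are elements of the $k$th \emph{column} of $Q$ — so orthonormality of columns gives $\delta_{jj'}$. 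Then $\Tr(S_{ij}^\dag \circ S_{i'j'}) = \sum_k \overline{(H_j)_{ik}}\,(H_{j'})_{i'k}\,\delta_{jj'}$, which for $j = j'$ equals $n\,\delta_{ii'}$ by~\eqref{had2} and for $j \neq j'$ equals $0$.

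I expect the main obstacle to be purely bookkeeping: keeping straight which index of $Q_{ijk}$ is the row, which is the column, and which is the computational-basis coefficient, and making sure the "column orthonormality" used is genuinely condition~\eqref{eq:qls2} rather than the row condition~\eqref{eq:qls1}. The subtle point flagged in the statement ("our use of quantum Latin squares requires nontrivial extra ideas") is precisely that in the classical shift-and-multiply proof one uses that distinct rows of a Latin square are permutations, whereas here one must instead exploit that only the \emph{diagonal} of $Q_j^\dag Q_{j'}$ contributes to the trace — and that this diagonal consists of inner products of corresponding column entries, which the quantum Latin square axioms control. No genuinely hard estimate is involved once that observation is in place.
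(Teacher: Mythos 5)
Your proposal is correct and follows essentially the same route as the paper: unitarity is immediate from the factorisation, and the orthogonality relation reduces to the key observation that $(Q_j^\dag \circ Q_{j'})_{kk} = \sum_m Q_{jkm}^* Q_{j'km} = \delta_{jj'}$ by the column condition~\eqref{eq:qls2}, combined with row-orthogonality of the Hadamard~\eqref{had2} when $j=j'$. The only difference is presentational: the paper splits off the $i=i'$ subcase and concludes the $j\neq j'$ case via cyclicity of the trace and Lemma~\ref{lem:composezerodiagonal}, whereas you carry out the trace directly in index form, which handles the bookkeeping in one pass.
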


\begin{proof}
First, we note that the elements $S_{ij} = Q_j \circ \diag(H_j,i)$ are unitary, since they are composites of unitary matrices: the matrix $Q_j$ is the $j$th row of a quantum Latin square, and hence unitary by Lemma~\ref{lem:qls12}; and $\diag(H_j,i)$ is a diagonal matrix with unit complex numbers along the diagonal, and hence unitary.

We must establish the following trace property:
\begin{equation}
\Tr (S_{ij} ^\dag \circ S ^\pdag _{i'j'}) = n \, \delta _{ii'} \delta _{jj'}
\end{equation}
We first consider the case that $j= j'$ and $i=i'$. By unitarity of $S_{ij}$ we have $S_{ij}^\pdag \circ S_{i'j'} ^\dag = \I_n$, with $\Tr(\I_n) = n$, and so the condition follows.

Next we consider the case that $j=j'$ and $i \neq i'$. We perform the following calculation:
\begin{align*}
 \Tr (S^{\dag}_{ij} \circ S ^\pdag_{i'j})
 &\super{\eqref{eq:defqls}}= \Tr \big( \diag (H_j, i)^{\dag} \circ Q^{\dag}_j \circ Q_{j}^\pdag \circ \diag(H_{j}, i') \big)
\\
&\super{\eqref{eq:qls1}}=
\Tr \big( \diag(H_j,i)^{\dag} \circ \diag(H_{j}, i') \big)
 \end{align*}
  The final expression is equal to the inner product of rows $i$ and $i'$ of the Hadamard $H_j$. Since distinct rows of a Hadamard are orthogonal, the result is zero as required.

It remains to consider the case that $j \neq j'$. We use the cyclic property of the trace to rearrange our trace expression:
\begin{align}
\nonumber
\Tr(S_{ij} ^\dag \circ S ^\pdag _{i'j'}) &\super{\eqref{eq:defqls}}= \Tr \big( \diag (H_j, i) ^\dag \circ Q_j ^\dag \circ Q_{j'}^\pdag \circ \diag(H_{j'}, i') \big)
\\
\label{eq:rearrangedtrace}
&= \Tr \big( \diag (H_{j'}, i') \circ \diag (H_j, i) ^\dag \circ Q_j ^\dag \circ Q_{j'}^\pdag \big)
\end{align}
Inside the trace there is the composite $\diag(H_{j'},i') \circ \diag(H_j, i)^\dag$, which is diagonal. There is also $Q_j ^\dag \circ Q_{j'}^\pdag$, which by the following argument is zero along the diagonal:
\begin{align}
\textstyle
&\textstyle
(Q_{j} ^\dag \circ Q_{j'} ^\pdag) _{kk} ^\pdag = \sum _{l} (Q ^\dag _{j}) _{kl} ^\pdag (Q_{j'} ^\pdag) _{lk} ^\pdag = \sum _{l} (Q^* _{j}) _{lk} ^\pdag (Q_{j'} ^\pdag) _{lk} ^\pdag
\super {\eqref{eq:indexswap}} =
\sum _{l} Q_{jkl} ^{*\pdag} Q ^\pdag_{j'kl}
\super {\eqref{eq:qls2}} =
\delta _{jj'} ^\pdag = 0
\end{align}
Hence by Lemma~\ref{lem:composezerodiagonal}, expression~\eqref{eq:rearrangedtrace} is zero as required.
\end{proof} 

\section{Shift-and-multiply method}
\label{sec:shiftandmultiply}

The shift-and-multiply method of Werner~\cite{werner2001all}, which was a direct inspiration for our own results,  can straightforwardly be seen as a special case of our quantum shift-and-multiply method. Our focus in this section is the proof that the unitary error basis $\M$ of Example~\ref{ex:qlsueb} is not equivalent to a shift-and-multiply basis, and thus that the shift-and-multiply bases are \textit{strictly} contained within the quantum shift-and-multiply bases.
\begin{definition}
\label{def:sm}
A \textit{shift-and-multiply basis} is a quantum shift-and-multiply basis where the quantum Latin square is a classical Latin square.
\end{definition}
\begin{theorem}
\label{thm:smqsm}
Every shift-and-multiply basis is a quantum shift-and-multiply basis.
\end{theorem}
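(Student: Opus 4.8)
The plan is to recognise that this statement is essentially a matter of unwinding Definition~\ref{def:sm}, the only genuine ingredient being the observation already made in the introduction that a classical Latin square can be regarded as a quantum Latin square.

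First I would recall that embedding explicitly. Given a classical Latin square $L$ of order $n$ (Definition~\ref{def:ls}), form the $n$-by-$n$ array of vectors $Q_{jk} := \ket{L_{jk}} \in \C^n$. Since each row and each column of $L$ contains every element of $\{0,\dots,n-1\}$ exactly once, the corresponding row (respectively column) of the array is a permutation of the computational basis $\{\ket 0,\dots,\ket{n-1}\}$, hence an orthonormal basis of $\C^n$. So $Q$ is a quantum Latin square of order $n$, and moreover $Q_j = \sum_k \ketbra{L_{jk}}{k}$ is exactly the permutation matrix of the map $k \mapsto L_{jk}$.

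Next, given any family $H_0,\dots,H_{n-1}$ of Hadamard matrices of order $n$, Definition~\ref{def:qsm} applies verbatim to the pair $(Q,\{H_j\})$ and produces the quantum shift-and-multiply basis with elements $S_{ij} = Q_j \circ \diag(H_j,i)$ as in \eqref{eq:defqls}; concretely $S_{ij} = \sum_k (H_j)_{ik}\,\ketbra{L_{jk}}{k}$, a monomial matrix whose ``shift'' is the permutation $k\mapsto L_{jk}$ and whose ``multiply'' part supplies the phases from row $i$ of $H_j$. By Definition~\ref{def:sm}, a shift-and-multiply basis \emph{is} such a family in which the underlying quantum Latin square comes from a classical Latin square in the manner just described; conversely every shift-and-multiply basis arises this way. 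Since $\{S_{ij}\}$ is by construction a quantum shift-and-multiply basis, the inclusion follows.

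I do not expect a real obstacle here: the theorem is a definitional inclusion, and the single non-formal step is the elementary verification that the classical-to-quantum embedding lands inside the class of quantum Latin squares, so that Definition~\ref{def:qsm} is actually applicable. (The nontrivial companion fact, that this inclusion is \emph{strict}, is what the subsequent corollaries about $\mathcal M$ establish, and is not needed for this statement.)
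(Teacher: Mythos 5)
Your proposal is correct and matches the paper's argument: the paper's proof is simply ``Follows immediately from Definitions~\ref{def:ls} and~\ref{def:sm},'' since Definition~\ref{def:sm} already defines a shift-and-multiply basis as a quantum shift-and-multiply basis whose quantum Latin square happens to be classical. Your write-up just spells out the classical-to-quantum embedding that the paper leaves implicit, which is exactly the right (and only) content of the statement.
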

\begin{proof}
Follows immediately from Definitions~\ref{def:ls} and~\ref{def:sm}.
\end{proof}

Monomial matrices will be crucial to our proof strategy.
\begin{definition}
A \textit{monomial matrix} is a square matrix with exactly one nonzero entry in each row and each column. Equivalently, it is any matrix $A$ which can be expressed as $A = D_A \circ P_A$, where $D_A$ is a diagonal matrix and $P_A$ is a permutation matrix.
\end{definition}
\begin{lemma}
\label{lem:DPPD}
Let $p$ be a permutation, $P=\sum_k\ketbra{p(k)}{k}$ be the corresponding permutation matrix, and $D=\sum_kd_k\ketbra{k}{k}$ and $D' = \sum _k d _{p(k)} \ketbra{k}{k}$ be diagonal matrices. Then the following holds:
\begin{equation}
\label{eq:DPPD}
D \circ P = P \circ D',
\end{equation}
\end{lemma}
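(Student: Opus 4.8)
The plan is to prove the identity by direct computation in Dirac notation, expanding both sides as sums of rank-one matrices $\ketbra{k}{l}$ and checking that the resulting expressions coincide term by term. Since both $D\circ P$ and $P\circ D'$ are products of an explicitly given diagonal matrix and the explicitly given permutation matrix, this is purely a matter of carrying the scalars past the bra-ket contractions correctly.

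First I would compute the left-hand side: using $D=\sum_k d_k\ketbra{k}{k}$ and $P=\sum_l\ketbra{p(l)}{l}$, we get
\begin{equation*}
D\circ P=\sum_{k,l}d_k\,\ket{k}\braket{k}{p(l)}\bra{l}=\sum_{k,l}d_k\,\delta_{k,p(l)}\,\ketbra{k}{l}=\sum_l d_{p(l)}\,\ketbra{p(l)}{l},
\end{equation*}
where the last step uses that for each $l$ the only surviving term is $k=p(l)$. Then I would compute the right-hand side: using $D'=\sum_l d_{p(l)}\ketbra{l}{l}$,
\begin{equation*}
P\circ D'=\sum_{k,l}d_{p(l)}\,\ket{p(k)}\braket{k}{l}\bra{l}=\sum_{k,l}d_{p(l)}\,\delta_{kl}\,\ketbra{p(k)}{l}=\sum_l d_{p(l)}\,\ketbra{p(l)}{l}.
\end{equation*}
The two final expressions are identical, which establishes \eqref{eq:DPPD}.

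There is no real obstacle here; the only thing to be careful about is the indexing convention for $P$ (i.e.\ that $P=\sum_k\ketbra{p(k)}{k}$ sends $\ket k\mapsto\ket{p(k)}$) and the matching shift $d_k\mapsto d_{p(k)}$ built into the definition of $D'$ — these are precisely arranged so the two sides agree. Equivalently, one could phrase the argument entrywise via $(D\circ P)_{ab}=\sum_c D_{ac}P_{cb}=d_a\delta_{a,p(b)}$ and $(P\circ D')_{ab}=\sum_c P_{ac}D'_{cb}=\delta_{a,p(b)}d_{p(b)}$, and note $d_a\delta_{a,p(b)}=d_{p(b)}\delta_{a,p(b)}$; I would include at most a one-line remark to this effect rather than a second full computation.
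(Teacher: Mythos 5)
Your proof is correct and is essentially the same direct computation as the paper's: the paper inserts $P\circ P^\dag$ and expands $P^\dag\circ D\circ P$ to recover $D'$, while you expand both sides to the common form $\sum_l d_{p(l)}\ketbra{p(l)}{l}$ — a trivial variation of the same argument.
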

\begin{proof}
We perform the following calculation:
\begin{align*}
&
\textstyle
D \circ P = P \circ P^{\dag} \circ D \circ P = P \circ \big( \sum_{ijk}
\ketbra{i}{p(i)}d_j\ket{j}
\braket{j}{p(k)}
\bra{k} \big)
\\
&
\textstyle
=P \circ \big( \sum_{ik} d_{p(k)}\ket{i}\braket{p(i)}{p(k)}\bra{k} \big) = P \circ \big( \sum_i d_{p(i)}\ketbra{i}{i} \big)=P \circ D'
\end{align*}
This completes the proof.
\end{proof}

\begin{lemma}
\label{lem:monomialproperties}
The set of monomial matrices is closed under composition, taking inverses, taking adjoints, and multiplication by nonzero complex scalars.
\end{lemma}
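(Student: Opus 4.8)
The plan is to work throughout with the factorization characterization stated in the definition: a matrix $A$ is monomial precisely when $A = D_A \circ P_A$ for a diagonal matrix $D_A$ and a permutation matrix $P_A$, where moreover $D_A$ is necessarily invertible since every row of $A$ must contain a nonzero entry. The single tool I would use for all four closure properties is Lemma~\ref{lem:DPPD}, equation~\eqref{eq:DPPD}, which lets a diagonal matrix be commuted past a permutation matrix at the cost of permuting its diagonal entries. Read from right to left, it says that for any permutation matrix $P$ and diagonal matrix $D$ there is a diagonal matrix $\widetilde D$ with $P \circ D = \widetilde D \circ P$; this is the one fact I would invoke repeatedly.

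Closure under multiplication by a nonzero scalar is immediate: if $A = D_A \circ P_A$ and $c \neq 0$, then $cA = (cD_A) \circ P_A$, and $cD_A$ is again diagonal (and invertible), so $cA$ is monomial. For adjoints, I would compute $A^\dag = (D_A \circ P_A)^\dag = P_A^\dag \circ D_A^\dag$, observe that $P_A^\dag = P_A^{-1}$ is again a permutation matrix while $D_A^\dag$ is again diagonal, and then apply the commutation remark to rewrite $P_A^{-1} \circ D_A^\dag = \widetilde D \circ P_A^{-1}$ for a suitable diagonal $\widetilde D$, exhibiting $A^\dag$ as monomial. For composition, given $A = D_A \circ P_A$ and $B = D_B \circ P_B$, I would write $A \circ B = D_A \circ P_A \circ D_B \circ P_B$ and commute $D_B$ leftward past $P_A$ to get $A \circ B = (D_A \circ \widetilde D_B) \circ (P_A \circ P_B)$, a product of a diagonal matrix and a permutation matrix, hence monomial. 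For inverses, I would first note that a monomial matrix is invertible — its determinant is, up to sign, the product of its nonzero entries, which is nonzero — so that $D_A$ is invertible; then $A^{-1} = P_A^{-1} \circ D_A^{-1}$, and the same commutation rewrites this as (diagonal)$\,\circ\,$(permutation), so $A^{-1}$ is monomial.

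I do not expect a genuine obstacle: the argument is bookkeeping with Lemma~\ref{lem:DPPD}. The only point deserving a moment's care is the inverse case, where one must first use the nondegeneracy built into the definition of a monomial matrix to know that $D_A$, and hence $A$, is invertible before the factorization manipulations apply.
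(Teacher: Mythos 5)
Your proof is correct, and it fills in exactly the bookkeeping that the paper dismisses with the single word ``Straightforward'': all four closure properties follow from the factorization $A = D_A \circ P_A$ together with the paper's own Lemma~\ref{lem:DPPD} used to commute diagonal matrices past permutation matrices. Your extra care on the inverse case (noting that the definition forces $D_A$ to be invertible) is a worthwhile detail that the paper's terse treatment glosses over.
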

\begin{proof}
Straightforward.
\end{proof}

\begin{definition}
A square matrix $A$ is \textit{monomializable} if there exists a unitary matrix $U$ such that $U \circ A \circ U^{\dag}$ is monomial. 
\end{definition}
\begin{definition}
A family of square matrices $A_1,...,A_n$ are \textit{simultaneously monomializable} if they are all monomializable by the same unitary matrix $U$.
\end{definition}
\ignore{\begin{definition}
A \textit{monomial basis} is a UEB in which every matrix is monomial.  
\end{definition}}

\ignore{
\begin{proof}
I THINK WE CAN OMIT THIS, IT'S QUITE AN INTUITIVE RESULT. For \ref{enum:mon1}, 
For monomial matrices $A,B$ with monomial decompositions $D_AP_A$ and $D_BP_B$ respectively: 

 $\exists D'_B,D''$ diagonal matrices and $P_{AB}=P_AP_B$  such that: 
\begin{equation}
AB=D_AP_AD_BP_B=D_AD'_BP_AP_B=D''P_{AB}
\end{equation}
The first equality is by Definition 8, the second holds due to Lemma 4, and the final equality is due to the fact that both the diagonal and permutation matrices are closed under matrix composition.

For property \ref{enum:mon2}: Let $A$ be a monomial matrix with $A=DP$ for $D$ diagonal and $P$ a permutation. Then $A^{-1}=P^{-1}D^{-1}=D'P^{-1}$ by Lemma 4 and since the diagonal and the permutation matrices are closed under inverse.
\end{proof}
}
We establish the following propositions, the first of which is adapted and generalized to suit our purposes from the literature.
\begin{proposition}[See \cite{klapp}, final part of the proof of Theorem~3]
\label{lem:UEBmonomial}
If a family $\mathcal S$ of unitary matrices containing the identity is equivalent (in the sense of Definition~\ref{def:equivUEB}) to a family of monomial matrices, then the members of $\mathcal S$ are simultaneously monomializable.
\end{proposition}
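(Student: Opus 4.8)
The plan is to unpack the hypothesis that $\mathcal S$ is equivalent to a monomial family $\mathcal T$ and exploit the fact that $\id \in \mathcal S$. By Definition~\ref{def:equivUEB} there are unitaries $U,V$ such that for each $A \in \mathcal S$ there is a monomial matrix $T_A \in \mathcal T$ and a unit scalar $c_A$ with $T_A = c_A\, U \circ A \circ V$; equivalently $A = c_A^{-1}\, U^\dag \circ T_A \circ V^\dag$. The key observation is to apply this to the special element $A = \id \in \mathcal S$: this gives $\id = c^{-1}\, U^\dag \circ T \circ V^\dag$ for some monomial $T$ and unit scalar $c$, hence $V^\dag = c\, U \circ \id = c\, U$, so $V = c^* U^\dag$ — that is, up to a scalar, $V$ is determined as $U^\dag$.

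Substituting $V = c^* U^\dag$ back into the general equation, for an arbitrary $A \in \mathcal S$ we get $A = c_A^{-1}\, U^\dag \circ T_A \circ (c^* U^\dag)^\dag = (c c_A^{-1})\, U^\dag \circ T_A \circ U$. Conjugating, $U \circ A \circ U^\dag = (c c_A^{-1})\, T_A$, which is a scalar multiple of the monomial matrix $T_A$, and hence monomial itself by Lemma~\ref{lem:monomialproperties} (the monomial matrices are closed under scalar multiplication). Since this holds for \emph{every} $A \in \mathcal S$ with the \emph{same} unitary $U$, the family $\mathcal S$ is simultaneously monomializable by $U$, which is exactly the conclusion.

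There is essentially no hard analytic content here; the only thing to be careful about is the bookkeeping of the scalars $c_A$ and $c$, and the direction of the equivalence relation in Definition~\ref{def:equivUEB} (it is stated asymmetrically, from $\mathcal A$ to $\mathcal B$, so one must make sure one is reading it with $\mathcal A = \mathcal S$ and $\mathcal B = \mathcal T$ the monomial family, and then invert to solve for $A$). The only genuine ``obstacle'', if it can be called that, is recognising that the presence of the identity in $\mathcal S$ is precisely what pins $V$ to $U^\dag$ up to phase — without that element, equivalence to a monomial family would only give a two-sided monomialisation $U \circ A \circ V$ rather than a conjugation. If one wanted to match the cited argument in \cite{klapp} more literally, one could alternatively phrase the last step as: the family $\{U \circ A \circ V : A \in \mathcal S\}$ is monomial and contains $U \circ V$ (the image of $\id$), which is therefore monomial, so composing on the right with $(U \circ V)^{-1}$ — itself monomial by Lemma~\ref{lem:monomialproperties} — yields a monomial family of the form $U \circ A \circ U^\dag$ up to the scalar absorbed from $(U\circ V)^{-1}$. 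Either route gives the result in a few lines.
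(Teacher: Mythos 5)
Your overall strategy is the right one, and your closing ``alternative phrasing'' is in fact exactly the paper's proof: the paper computes $\frac{c_0}{c_j}\,T_j \circ T_0^\dag = U \circ S_j \circ U^\dag$, where $T_0 = c_0\, U \circ V$ is the (monomial) image of the identity, and invokes Lemma~\ref{lem:monomialproperties} to conclude that the right-hand side is monomial. However, your main derivation contains a false intermediate claim. From $\id = c^{-1}\, U^\dag \circ T \circ V^\dag$ you cannot conclude $V^\dag = c\,U$; solving correctly gives $V^\dag = c\, T^{-1} \circ U$, where $T$ is merely \emph{some} monomial matrix, not the identity. So the presence of $\id$ in $\mathcal S$ does \emph{not} pin $V$ to $U^\dag$ up to a phase — it only pins $V$ to $U^\dag$ up to a phase \emph{and a monomial factor}. (Counterexample: $\mathcal S = \{\id\}$, $U = \id$, $V = P$ a nontrivial permutation matrix, $\mathcal T = \{P\}$.) Consequently your displayed identity $U \circ A \circ U^\dag = (c\,c_A^{-1})\, T_A$ should read $U \circ A \circ U^\dag = (c\,c_A^{-1})\, T_A \circ T^{-1}$.

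The good news is that this error is harmless for the conclusion: $T_A \circ T^{-1}$ is still monomial by Lemma~\ref{lem:monomialproperties} (closure under composition, inverses and scalars), so $U$ still simultaneously monomializes $\mathcal S$. What the identity element actually buys is that $U \circ V$ is monomial up to a phase, which lets you cancel $V$ against $V^\dag$ by composing with $(U \circ V)^{-1}$ — precisely your second route and the paper's argument. You should either delete the claim that $V = c^*\,U^\dag$ and carry the monomial factor $T^{-1}$ through, or simply promote your final paragraph to be the proof.
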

\begin{proof}
Let $\mathcal S = \{S_i\}$ be a family of unitary matrices with $S_{0} = \I_n$. Suppose $S_{i}$ is equivalent to some monomial family $\mathcal T= \{T_i\}$ with $T_i = c_{i} U S_i V$, such that each $c_{i}$ is a complex number of norm 1, and $U,V$ are unitary matrices. We then perform the following calculation:
\begin{equation}
\frac{c_0}{c_j}T _j ^\pdag T _0 ^\dag= \frac {c_0} {c_j} (c_j U S_j V)(c_0 U S_0 V) ^\dag = c_0^{\vphantom{*}} c_0^* U S_j V V^\dag \I_n U^\dag = U S_j U^\dag
\end{equation} 
The left hand side  is monomial by Lemma~\ref{lem:monomialproperties}, and hence $U$ simultaneously monomializes~$S_i$.
\end{proof}
\ignore{
\begin{proof}
GOOD CANDIDATE FOR OMITTING, SINCE IT IS NOT OUR PROOF ANYWAY? Let our UEB be $S_{ij}$ $|$ $0 \leq i,j<d$ , and let $S_{00}$ be the identity. Suppose $S_{ij}$ is equivalent to some monomial basis $T_{ij}=c_{ij}AS_{ij}B$ such that $c_{ij} \in \mathbb{C}$ and $A,B$ are unitary matrices. Then
$T_{00}:=c_{00}AS_{00}B=c_{00}AB$ is monomial. 

$\forall i,j \, \, \, 0 \leq i,j<d:$
\begin{equation}
T_{ij}=c_{ij}AS_{ij}B=c_{ij}AS_{ij}A^{\dag}AB=\frac{c_{ij}}{c_{00}}AS_{ij}A^{\dag}T_{00} \\ \Rightarrow \frac{c_{00}}{c_{ij}}T_{ij}T_{00}^{\dag}=AS_{ij}A^{\dag}
\end{equation} 
The left hand side of this final equation is monomial by Lemma 5. So given a UEB\ with identity $S_{ij}$, equivalent to a monomial basis,  $\exists$ a unitary $A$ such that $AS_{ij}A^{\dag}$ for all $i,j$ are monomial, ie the whole basis is simultaneously monomializable.
\end{proof}
}
\begin{proposition}
\label{prop:simmon}
Let $A,B$ be square matrices of size $n$, and let $\mu_n$ be the lowest common multiple of $\{1,2,...,n\}$. If $A$ and $B$ are simultaneously monomializable, then $A^{\mu_n}$ and $B^{\mu_n}$ commute.
\end{proposition}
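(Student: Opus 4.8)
The plan is to reduce the claim to a statement about a single monomial matrix and then exploit the structure $M = D \circ P$. Since $A$ and $B$ are simultaneously monomializable, there is a unitary $U$ with $UAU^\dag$ and $UBU^\dag$ both monomial. Conjugation is an algebra homomorphism, so $U A^{\mu_n} U^\dag = (UAU^\dag)^{\mu_n}$ and likewise for $B$, and two matrices commute if and only if their conjugates by a fixed unitary commute. Hence it suffices to prove the following: if $M$ is a monomial matrix of size $n$, then $M^{\mu_n}$ is a diagonal matrix. Once that is known, $A^{\mu_n}$ and $B^{\mu_n}$ are both conjugate (by $U^\dag$) to diagonal matrices via the \emph{same} $U$, and diagonal matrices commute, so we are done. (It is worth remarking that this is exactly why the exponent must be the lcm $\mu_n$ and not merely $n!$ or $n$: the permutation part can be any element of $S_n$, whose order divides $\mu_n$.)

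So the core step is: for a monomial $M$ of size $n$, $M^{\mu_n}$ is diagonal. Write $M = D \circ P$ with $D$ diagonal and $P = \sum_k \ketbra{p(k)}{k}$ a permutation matrix, using the decomposition from the definition of monomial matrix. Using Lemma~\ref{lem:DPPD} repeatedly, one can push all the permutation factors to the right: a product $D\circ P\circ D\circ P\circ\cdots$ of $m$ copies of $DP$ rewrites as $D^{(1)}\circ D^{(2)}\circ\cdots\circ D^{(m)}\circ P^m$ for suitable diagonal matrices $D^{(r)}$ (each $D^{(r)}$ obtained from $D$ by permuting its diagonal entries according to a power of $p$). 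Thus $M^m = D_m \circ P^m$ where $D_m := D^{(1)}\circ\cdots\circ D^{(m)}$ is diagonal. Now take $m = \mu_n$. The order of the permutation $p \in S_n$ is the lcm of its cycle lengths, each of which lies in $\{1,\dots,n\}$, so this order divides $\mu_n = \mathrm{lcm}\{1,\dots,n\}$; hence $p^{\mu_n} = \mathrm{id}$ and $P^{\mu_n} = \I_n$. Therefore $M^{\mu_n} = D_{\mu_n} \circ \I_n = D_{\mu_n}$ is diagonal, as required.

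Assembling the pieces: let $U$ simultaneously monomialize $A$ and $B$, set $M_A = U A U^\dag$ and $M_B = U B U^\dag$. By the paragraph above, $M_A^{\mu_n}$ and $M_B^{\mu_n}$ are diagonal, hence commute. Then
\begin{equation}
A^{\mu_n} B^{\mu_n} = U^\dag M_A^{\mu_n} U U^\dag M_B^{\mu_n} U = U^\dag (M_A^{\mu_n} M_B^{\mu_n}) U = U^\dag (M_B^{\mu_n} M_A^{\mu_n}) U = B^{\mu_n} A^{\mu_n},
\end{equation}
which is the claim.

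The only real obstacle is bookkeeping in the ``push the permutations to the right'' step — making precise that $D\circ P \circ D' = D\circ D'' \circ P$ for an appropriately relabelled diagonal $D''$ (immediate from Lemma~\ref{lem:DPPD}), and then iterating to collapse $m$ factors of $DP$ into (diagonal)$\circ P^m$. This is a routine induction on $m$, so there is no genuine difficulty; the conceptual content is entirely in the observation that $\mathrm{ord}(p) \mid \mu_n$ because every cycle length is at most $n$. One should also note explicitly the standard facts used without comment: conjugation by a fixed unitary is a ring homomorphism (so it commutes with taking powers) and is a bijection on matrices (so it reflects commutativity), and Lemma~\ref{lem:monomialproperties} is not even needed here since we only use closure of diagonal and permutation matrices under composition, which is elementary.
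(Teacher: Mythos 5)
Your proof is correct and follows essentially the same route as the paper: conjugate by the common monomializing unitary, write the monomial matrix as $D\circ P$, use Lemma~\ref{lem:DPPD} repeatedly to collect the diagonal factors so that $M^{\mu_n}$ becomes (diagonal)$\circ P^{\mu_n}$, observe that the order of the permutation divides $\mu_n$ so $P^{\mu_n}=\I_n$, and conclude because diagonal matrices commute. No gaps; your explicit remarks on conjugation being a power-preserving bijection are just a more careful spelling-out of steps the paper leaves implicit.
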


\begin{proof}
Suppose $A,B$ are simultaneously monomializable,  with $\mu_n$ defined as above. Then there exists a unitary matrix $U$ such that $UAU^{\dag}=D_AP_A$ and $UBU^{\dag}=D_BP_B$ where $D_A,D_B$ are diagonal matrices and $P_A,P_B$ are permutation matrices. Note that $A = U^\dag D_A P_A U$, so we have the following:
\begin{equation}
A^{\mu_n}=U^{\dag} (D_AP_A)^{\mu_n} U = U ^\dag  \widetilde{D}_A^\pdag P_A ^{\mu_n} U
\end{equation} 
Here $\widetilde{D}_A$ is some diagonal matrix, and the last equality is obtained by repeated application of Lemma~\ref{lem:DPPD} and the fact that diagonal matrices are closed under composition. Since $P_A$ is a permutation matrix of dimension $n$  it has order $k$, where $k$ is the lowest common multiple of the lengths of the permutation's cycles. Each cycle has length $\in \{1,2,...,n\}$. Thus $k$ divides $\mu_n$, and so $P_A^{\mu_n} = \mathbb I_n$. So $A^{\mu_n} = U^{\dag} \widetilde{D}_A U$, and by the same argument, $B ^{\mu_n} = U ^\dag \widetilde{D}_B U$ for some diagonal matrix $\widetilde{D}_B$. We then demonstrate that $A ^{\mu_n}$ and $B ^{\mu_n}$ commute:
\begin{equation}
\nonumber
A ^{\mu_n} B ^{\mu_n} = U ^\dag \widetilde{D}_A U U ^\dag \widetilde{D}_B U = U ^\dag \widetilde{D}_A \widetilde{D}_B U = U ^\dag \widetilde{D}_B \widetilde{D}_A U = U ^\dag \widetilde{D}_B U U ^\dag \widetilde{D}_A U = B ^{\mu_n} A ^{\mu_n}
\end{equation}
The central equality here holds because diagonal matrices commute.
\end{proof}

We are now ready to prove the necessary properties of our example basis.
\begin{theorem}
\label{thm:Mnotmon}
The basis $\mathcal M$ of Example~\ref{ex:qlsueb} is not equivalent to a monomial basis.
\end{theorem}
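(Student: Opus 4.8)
The plan is to combine Proposition~\ref{lem:UEBmonomial} with Proposition~\ref{prop:simmon}, reducing the statement to a finite matrix computation. First I would observe that $\M$ contains the identity: the $0$th row of the quantum Latin square of Example~\ref{ex:qls} is the computational basis, so $Q_0 = \I_4$, and the $0$th row of each Hadamard in Example~\ref{ex:qlsueb} is all-ones, so $\diag(H_0,0) = \I_4$; hence $\M_{00} = Q_0 \circ \diag(H_0,0) = \I_4$. So if $\M$ were equivalent to a monomial basis in the sense of Definition~\ref{def:equivUEB}, then by Proposition~\ref{lem:UEBmonomial} all sixteen matrices of $\M$ would be simultaneously monomializable, and hence by Proposition~\ref{prop:simmon}, applied with $n=4$ so that $\mu_4 = \mathrm{lcm}\{1,2,3,4\} = 12$, the $12$th powers of any two elements of $\M$ would commute. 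It therefore suffices to exhibit two elements of $\M$ whose $12$th powers do not commute.

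The pair I would use is $\M_{01}$ and $\M_{02}$. Since the $0$th row of each Hadamard is all-ones we have $\diag(H_j,0) = \I_4$ and so $\M_{0j} = Q_j$; thus these two elements are simply the row matrices $Q_1$ and $Q_2$ of Example~\ref{ex:qls}. Reading off that example, both $Q_1$ and $Q_2$ carry $\mathrm{span}\{\ket 0,\ket 3\}$ into $\mathrm{span}\{\ket 1,\ket 2\}$ and $\mathrm{span}\{\ket 1,\ket 2\}$ into $\mathrm{span}\{\ket 0,\ket 3\}$. Hence $Q_1^2$ and $Q_2^2$ preserve the splitting $\C^4 = \mathrm{span}\{\ket 0,\ket 3\} \oplus \mathrm{span}\{\ket 1,\ket 2\}$, \ie each is block-diagonal with two $2\times 2$ unitary blocks, and $Q_j^{12} = (Q_j^2)^6$ is then block-diagonal with blocks equal to the sixth powers of those $2\times 2$ unitaries. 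The commutator $Q_1^{12}Q_2^{12} - Q_2^{12}Q_1^{12}$ likewise splits blockwise, so it is enough to find one pair of corresponding $2\times 2$ blocks that fails to commute.

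For the remaining step, each relevant $2\times 2$ unitary $U$ can be written as $U = \lambda V$ with $\lambda$ a scalar and $V \in SU(2)$; picking $K$ with $K^2 = -\I_2$ so that $V = \cos\theta\,\I_2 + \sin\theta\,K$ (possible since $V \ne \pm \I_2$, with $\cos\theta$ determined by $\mathrm{tr}\,V$), one has $V^6 = \cos 6\theta\,\I_2 + \sin 6\theta\,K$ and hence $U^6 = \lambda^6 V^6$. Carrying this out for the two diagonal blocks of $Q_1^2$ and of $Q_2^2$, then multiplying the resulting block-diagonal matrices $Q_1^{12}$ and $Q_2^{12}$ in both orders, one checks directly that the two products differ, so the $12$th powers do not commute. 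This contradicts the consequence of Proposition~\ref{prop:simmon} obtained above, and hence $\M$ is not equivalent to a monomial basis.

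I expect the last step to be the only genuine obstacle: the sixth-power computation and the check of non-commutativity are somewhat intricate to do by hand, and it is the block-antidiagonal structure of $Q_1$ and $Q_2$ that makes it tractable, by reducing everything to powers of $2\times 2$ unitaries which are governed by their traces. One must also take care that the chosen pair $\M_{01},\M_{02}$ genuinely has non-commuting $12$th powers rather than accidentally commuting ones.
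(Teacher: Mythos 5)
Your proposal is correct and follows essentially the same route as the paper: reduce via Proposition~\ref{lem:UEBmonomial} (using $\M_{00}=\I_4$) to simultaneous monomializability, apply Proposition~\ref{prop:simmon} with $\mu_4=12$, and exhibit the non-commuting $12$th powers of the same pair $\M_{01}=Q_1$ and $\M_{02}=Q_2$; the paper simply states the resulting commutator, which is indeed nonzero, in equation~\eqref{eqn:commutator}. Your block-antidiagonal organization of that final computation is a nice way to make it tractable by hand, but it is a presentational refinement rather than a different argument.
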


\begin{proof}
For a contradiction, suppose that $\mathcal M$ is equivalent to a monomial basis. Note that $\mathcal M$ contains the identity matrix, so by Proposition~\ref{lem:UEBmonomial} the elements of the UEB are simultaneously monomializable. The least common multiple of $\{1,2,3,4\}$ is $\mu_4 = 12$; thus by Proposition~\ref{prop:simmon} the 12th powers of the elements of $\mathcal M$ will commute. To exhibit the contradiction, we compute the following commutator: 
\begin{equation}
\label{eqn:commutator}
(\M_{01})^{12}(\M_{02})^{12}-(\M_{02})^{12}(\M_{01})^{12}=
\frac{12168}{15625}
\begin{pmatrix}
-i & 0 & 0 & 2 \\
0 & 0 & 0 & 0 \\
0 & 0 & 0 & 0 \\
-2 & 0 & 0 &  i  \\
\end{pmatrix} 
\neq 0
\end{equation}
This completes the proof.
\end{proof}

\begin{proposition}
\label{prop:smmonomial}
Shift-and-multiply bases are monomial bases.
\end{proposition}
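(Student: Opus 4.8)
The plan is to unwind the definitions. A shift-and-multiply basis is, by Definition~\ref{def:sm}, a quantum shift-and-multiply basis $S_{ij} = Q_j \circ \diag(H_j,i)$ in which $Q$ is a \emph{classical} Latin square, i.e.\ each entry $Q_{jk}$ is a computational basis vector $\ket{\sigma_j(k)}$ for some function $\sigma_j\colon\{0,\dots,n-1\}\to\{0,\dots,n-1\}$. The key observation is that the classical Latin square condition forces each row matrix $Q_j$ to be a permutation matrix: since the $j$th row of the Latin square is a rearrangement of $\ket 0,\dots,\ket{n-1}$, the map $\sigma_j$ is a bijection, and $Q_j = \sum_k \ketbra{\sigma_j(k)}{k}$ is exactly the permutation matrix associated to $\sigma_j$.

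From here the proof is immediate. First I would note that $\diag(H_j,i)$ is a diagonal matrix, being the diagonal matrix whose entries are the $i$th row of the Hadamard $H_j$ (by equation~\eqref{eq:diag}). Then $S_{ij} = Q_j \circ \diag(H_j,i)$ is the composite of a permutation matrix with a diagonal matrix. I would then invoke Lemma~\ref{lem:DPPD} (or simply the definition of monomial matrix, which explicitly allows either order $D\circ P$ or, after one application of Lemma~\ref{lem:DPPD}, $P\circ D$) to conclude that $S_{ij}$ has exactly one nonzero entry in each row and column, hence is monomial. Since every element $S_{ij}$ of the basis is monomial, the basis is a monomial basis. (If the paper has defined ``monomial basis'' as a UEB all of whose matrices are monomial, this is precisely what we have shown; one also uses Theorem~\ref{thm:main} to know $S_{ij}$ is a UEB in the first place, though for a shift-and-multiply basis this is part of the standing hypotheses.)

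There is essentially no obstacle here: the only point requiring a sentence of justification is the claim that a classical Latin square yields permutation matrices $Q_j$, which follows directly from Definition~\ref{def:ls} (each row contains each number exactly once) together with the identification of numbers with basis vectors described after Definition~\ref{def:ls}. The rest is bookkeeping with the definition of monomial matrix and Lemma~\ref{lem:DPPD}. I would therefore keep the proof to two or three lines: observe $Q_j$ is a permutation matrix, observe $\diag(H_j,i)$ is diagonal, and cite the closure/definition facts to conclude $S_{ij} = Q_j \circ \diag(H_j,i)$ is monomial.
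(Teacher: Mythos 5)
Your proof is correct and follows essentially the same route as the paper's: identify each $Q_j$ as a permutation matrix coming from a row of the classical Latin square, note $\diag(H_j,i)$ is diagonal, and conclude the product is monomial by definition. Your extra care about the order of the factors (handled by Lemma~\ref{lem:DPPD} or the entrywise characterization of monomial matrices) is a fair point the paper glosses over, but it is not a substantive difference.
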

\begin{proof}
Recall from Definition~\ref{def:sm} of a shift-and-multiply basis that each matrix is the product of a diagonal matrix with the permutation matrix given by a row of a classical Latin square. By definition, the result is a monomial matrix.
\end{proof}

\begin{corollary}
\label{cor:MnotSM}
The basis $\mathcal M$ of Example~\ref{ex:qlsueb} is not equivalent to a shift-and-multiply basis.
\end{corollary}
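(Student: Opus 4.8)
The plan is to chain together two results already in hand. By Proposition~\ref{prop:smmonomial}, every shift-and-multiply basis is a monomial basis. By Theorem~\ref{thm:Mnotmon}, the basis $\mathcal M$ of Example~\ref{ex:qlsueb} is not equivalent to a monomial basis. Hence $\mathcal M$ cannot be equivalent to a shift-and-multiply basis.

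Spelling this out: suppose for a contradiction that $\mathcal M$ were equivalent, in the sense of Definition~\ref{def:equivUEB}, to some shift-and-multiply basis $\mathcal S$. By Proposition~\ref{prop:smmonomial}, $\mathcal S$ is a monomial basis, i.e.\ a UEB all of whose elements are monomial matrices. But then $\mathcal M$ would be equivalent to a monomial basis, directly contradicting Theorem~\ref{thm:Mnotmon}. Therefore no such $\mathcal S$ exists, and $\mathcal M$ is not equivalent to a shift-and-multiply basis.

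There is essentially no obstacle here: the corollary is a two-line consequence of the preceding proposition and theorem, with all the real work — the commutator computation in~\eqref{eqn:commutator}, and the simultaneous-monomializability machinery of Propositions~\ref{lem:UEBmonomial} and~\ref{prop:simmon} — already carried out upstream. The only thing to be careful about is that the notion of equivalence is transitive in the relevant sense, which is immediate from the form of~\eqref{eq:UEBequiv} (composition of the witnessing unitaries $U,V$ and multiplication of the scalars $c$).

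\begin{proof}
Suppose for a contradiction that $\mathcal M$ is equivalent, in the sense of Definition~\ref{def:equivUEB}, to a shift-and-multiply basis. By Proposition~\ref{prop:smmonomial}, every shift-and-multiply basis is a monomial basis, so $\mathcal M$ is equivalent to a monomial basis. This contradicts Theorem~\ref{thm:Mnotmon}. Hence $\mathcal M$ is not equivalent to a shift-and-multiply basis.
\end{proof}
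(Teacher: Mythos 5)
Your proposal is correct and matches the paper's own proof, which simply cites Theorem~\ref{thm:Mnotmon} and Proposition~\ref{prop:smmonomial} as immediate. The extra remark on transitivity of equivalence is harmless but not needed, since a shift-and-multiply basis is itself a monomial basis rather than merely equivalent to one.
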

\begin{proof}
Immediate from Theorem~\ref{thm:Mnotmon} and Proposition~\ref{prop:smmonomial}.
\end{proof}

\section{Hadamard method}

In this section we study the \textit{Hadamard method}, a direct construction of a unitary error basis from a Hadamard matrix. While this is certainly known, we cannot find a clear description of it in full generality, although a special case is worked out in detail in~\cite{bobross}. The main results of this section are Theorem~\ref{thm:MUBQLS}, where we show that the quantum shift-and-multiply method contains the Hadamard method as a special case, and Corollary~\ref{cor:Mnothad}, in which we show that this containment is proper.\begin{definition}[Hadamard method; folklore]
\label{def:hadueb}
For a Hadamard matrix $H$ of order $n$, its associated \emph{Hadamard basis} $\{ (U_H) _{ij} \}$ is defined as follows:
\begin{equation}
\label{eqn:MUBUEB}
(U_H)_{ij}=\textstyle\frac{1}{n}H \circ \diag(H,j)^{\dag} \circ H^{\dag} \circ \diag(H^T,i)
\end{equation} 
\end{definition}
\begin{theorem}
\label{thm:MUBQLS}
A Hadamard basis is a quantum shift-and-multiply basis.
\end{theorem}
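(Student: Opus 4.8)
\textbf{Proof plan for Theorem~\ref{thm:MUBQLS}.}

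The plan is to exhibit the Hadamard basis of equation~\eqref{eqn:MUBUEB} explicitly in the form of equation~\eqref{eq:defqls}, by identifying the right quantum Latin square and the right family of Hadamard matrices. Comparing the two formulas, the factor $\frac{1}{n}H \circ \diag(H,j)^{\dag} \circ H^{\dag}$ appearing in $(U_H)_{ij}$ is precisely $(Q_H)_j$, the $j$th row-matrix of the associated quantum Latin square of Definition~\ref{def:hadqls}, which is a genuine quantum Latin square by Theorem~\ref{lem:MUBQLS}. The remaining factor is $\diag(H^T,i)$. So the natural guess is: take the quantum Latin square $Q_H$, and take the constant family of Hadamard matrices $H_j := H^T$ for all $j$. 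Then the quantum shift-and-multiply basis built from this data has elements $S_{ij} = (Q_H)_j \circ \diag(H_j, i) = (Q_H)_j \circ \diag(H^T,i)$, which is exactly $(U_H)_{ij}$.

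The steps I would carry out are: (1) recall that $H^T$ is a Hadamard matrix whenever $H$ is — this is immediate from the defining equations~\eqref{had1}, \eqref{had2}, \eqref{had3}, since transposition swaps the roles of \eqref{had2} and \eqref{had3} and preserves \eqref{had1}; (2) invoke Theorem~\ref{lem:MUBQLS} to note that $Q_H$, with row-matrices $(Q_H)_j = \frac1n H \circ \diag(H,j)^\dag \circ H^\dag$, is a quantum Latin square of order $n$; (3) apply Definition~\ref{def:qsm} to the quantum Latin square $Q_H$ and the constant family $H_j = H^T$, obtaining a quantum shift-and-multiply basis with entries $S_{ij} = (Q_H)_j \circ \diag(H^T,i)$; (4) observe that this matches \eqref{eqn:MUBUEB} term for term, so the Hadamard basis $\{(U_H)_{ij}\}$ is literally this quantum shift-and-multiply basis. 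Since Theorem~\ref{thm:main} guarantees every quantum shift-and-multiply basis is a unitary error basis, this simultaneously re-proves that the Hadamard basis is a UEB, which is reassuring.

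There is essentially no obstacle here: the result is a direct pattern-match between the two definitions, and the only genuine (if trivial) content is the remark that the transpose of a Hadamard matrix is a Hadamard matrix and that $Q_H$ is a legitimate quantum Latin square — both already in hand. If one wanted to be slightly more careful, one should double-check the index conventions in $\diag(H^T,i)$ versus $\diag(H,j)$ to confirm no hidden transpose or conjugate is needed, but this is a one-line verification using \eqref{eq:diag}. The substantive work of the section — that the containment is \emph{proper}, witnessed by $\mathcal M$ — is deferred to Corollary~\ref{cor:Mnothad} and the surrounding results, not to this theorem.
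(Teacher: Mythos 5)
Your proposal is correct and is essentially identical to the paper's own proof: both identify $(U_H)_{ij}=(Q_H)_j\circ\diag(H^T,i)$, invoke Theorem~\ref{lem:MUBQLS} for the quantum Latin square $Q_H$, and use the constant family $H_j=H^T$ together with the observation that the transpose of a Hadamard is a Hadamard. Your version merely spells out the pattern-match in more detail than the paper's two-line argument.
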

\begin{proof}
By Definition~\ref{def:hadqls} and Theorem~\ref{lem:MUBQLS} we have $(U_H)_{ij}=(Q_{H})_j \circ \diag(H^T,i)$. Since the transpose of a Hadamard is also a Hadamard, the result follows.
\end{proof}
\begin{corollary}
A Hadamard basis is a unitary error basis.
\end{corollary}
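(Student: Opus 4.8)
The plan is to chain together two results already established in this section. Theorem~\ref{thm:MUBQLS} shows that every Hadamard basis is a quantum shift-and-multiply basis, and Theorem~\ref{thm:main} shows that every quantum shift-and-multiply basis is a unitary error basis. So the proof is a one-line composition of these two facts.

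First I would invoke Theorem~\ref{thm:MUBQLS} to conclude that the Hadamard basis $\{(U_H)_{ij}\}$ associated to a Hadamard matrix $H$ is a quantum shift-and-multiply basis — concretely, $(U_H)_{ij} = (Q_H)_j \circ \diag(H^T, i)$, which is the quantum shift-and-multiply basis built from the Hadamard quantum Latin square $Q_H$ together with the constant family of Hadamard matrices $H_j := H^T$ (using that the transpose of a Hadamard is again a Hadamard). Then I would apply Theorem~\ref{thm:main}, which tells us that any quantum shift-and-multiply basis satisfies the defining trace orthogonality relation $\Tr(U_{ij}^\dag \circ U_{i'j'}) = \delta_{ii'}\delta_{jj'} n$ and consists of unitaries. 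Combining these two steps immediately yields that the Hadamard basis is a unitary error basis.

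There is no real obstacle here: the corollary is a formal consequence of the two preceding theorems, and the only thing to be careful about is making sure the hypotheses line up — namely that $H^T$ is genuinely a Hadamard matrix (immediate from the defining equations~\eqref{had1}--\eqref{had3}, which are symmetric under transposition) so that Definition~\ref{def:qsm} applies with this choice of data. I would therefore write the proof as simply: ``By Theorem~\ref{thm:MUBQLS} a Hadamard basis is a quantum shift-and-multiply basis, and by Theorem~\ref{thm:main} a quantum shift-and-multiply basis is a unitary error basis.''
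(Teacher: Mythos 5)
Your proposal is correct and is exactly the paper's argument: the corollary follows by combining Theorem~\ref{thm:MUBQLS} (a Hadamard basis is a quantum shift-and-multiply basis) with Theorem~\ref{thm:main} (a quantum shift-and-multiply basis is a unitary error basis). The extra care you take in checking that $H^T$ is a Hadamard matrix is a reasonable elaboration of the same step already noted in the proof of Theorem~\ref{thm:MUBQLS}.
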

\begin{proof}
Follows from Theorems~\ref{thm:main} and~\ref{thm:MUBQLS}.
\end{proof}

\begin{proposition}
\label{equivhadUEB1}
If two Hadamard matrices are equivalent by Definition~\ref{def:equivhad}, then their associated unitary error bases are equivalent by Definition~\ref{def:equivUEB}.
\end{proposition}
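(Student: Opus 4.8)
The plan is to prove this on a case-by-case basis, mirroring the structure of the proof of Lemma~\ref{lem:MnotHad}, using the decomposition $H' = D_1 \circ P_1 \circ H \circ P_2 \circ D_2$ from Definition~\ref{def:equivhad}. Since equivalence of Hadamard matrices is generated by the four elementary moves (left-multiply by a permutation, right-multiply by a permutation, left-multiply by a diagonal unitary, right-multiply by a diagonal unitary) and composites of equivalences of UEBs are again equivalences, it suffices to check that each elementary move on $H$ induces an equivalence of the associated Hadamard bases $(U_H)_{ij}$ and $(U_{H'})_{ij}$. Throughout I would expand the definition~\eqref{eqn:MUBUEB}, namely $(U_H)_{ij}=\frac 1 n H \circ \diag(H,j)^\dag \circ H^\dag \circ \diag(H^T,i)$, substitute the perturbed matrix $H'$, and then push the permutation and diagonal factors around using Lemma~\ref{lem:diag} (equations~\eqref{eqn:diag1}--\eqref{eqn:diag4}) together with Lemma~\ref{lem:DPPD} to normalize everything back into the form $c \, U \circ (U_H)_{\phi(i)\psi(j)} \circ V$ for suitable unitaries $U,V$, permutations of the index pairs, and unit scalar $c$.

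Concretely: for $H' = P_1 \circ H$, conjugation-like manipulation gives $(U_{H'})_{ij} = \frac 1 n P_1 \circ H \circ \diag(P_1 H, j)^\dag \circ H^\dag \circ P_1^\dag \circ \diag((P_1 H)^T, i)$; applying~\eqref{eqn:diag1} turns $\diag(P_1 H,j)$ into $\diag(H, p_1(j))$, and $\diag((P_1H)^T,i) = \diag(H^T P_1^T, i)$ can be rewritten via~\eqref{eqn:diag2} as $\diag(H^T, i)$ conjugated by a permutation, which combines with the trailing $P_1^\dag$. The net effect is $(U_{H'})_{ij} = P_1 \circ (U_H)_{i, p_1(j)} \circ P_1^\dag$ (up to relabelling), an equivalence with $U = P_1$, $V = P_1^\dag$, $c=1$. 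The case $H' = H \circ P_2$ is handled symmetrically, producing a permutation of the $i$-index and conjugation by $P_2$, using~\eqref{eqn:diag2} and~\eqref{eq:DPPD}. For $H' = D_1 \circ H$, equation~\eqref{eqn:diag3} extracts a scalar $(D_1)_{jj}$ from the middle diag factor, while~\eqref{eqn:diag3} again on the transpose factor $\diag((D_1 H)^T, i) = \diag(H^T D_1, i)$ via~\eqref{eqn:diag4} produces a diagonal factor that can be absorbed; the surviving $D_1$ and $D_1^\dag$ at the two ends give the equivalence unitaries, and the scalar is a unit complex number. The case $H' = H \circ D_2$ is similar, with $D_2$ cancelling against $D_2^\dag$ in the interior after commuting through the permutation-free diagonal factors, and the transpose factor $\diag((HD_2)^T,i)=\diag(D_2 H^T,i) = (D_2)_{ii}\diag(H^T,i)$ contributing the unit scalar.

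The main obstacle — really the only non-routine point — is bookkeeping the transpose: the factor $\diag(H^T, i)$ in~\eqref{eqn:MUBUEB} means that a left action on $H$ becomes a right action on $H^T$ and vice versa, so the four elementary moves do not act ``diagonally'' on the pair of indices $(i,j)$ but swap which index is affected. One must be careful that $(P_1 H)^T = H^T P_1^T$ and $(D_1 H)^T = H^T D_1$, and then apply the correct one of~\eqref{eqn:diag1}--\eqref{eqn:diag4} to the transposed argument. Once this is tracked correctly, each case reduces to collecting a permutation matrix or diagonal unitary on the far left, its inverse on the far right, a permutation of the index set $\{(i,j)\}$, and a unit scalar — precisely the data of Definition~\ref{def:equivUEB}. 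Composing the equivalences obtained for the individual elementary moves yields the equivalence for a general $H' = D_1 \circ P_1 \circ H \circ P_2 \circ D_2$, completing the proof.
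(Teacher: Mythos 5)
Your proposal is correct and follows essentially the same route as the paper: a case-by-case reduction to the four elementary moves, expanding~\eqref{eqn:MUBUEB} and normalizing with Lemma~\ref{lem:diag} and Lemma~\ref{lem:DPPD}, including the key bookkeeping point that the factor $\diag(H^T,i)$ converts left actions on $H$ into right actions on $H^T$. The concrete outcomes you state (e.g.\ $(U_{H'})_{ij}=P\circ (U_H)_{i,p(j)}\circ P^{-1}$ for $H'=P\circ H$, and the unit scalars $(D)_{jj}$, $(D)_{ii}$ in the diagonal cases) match the paper's calculation.
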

\begin{proof}
We will once again prove equivalence on a case-by-case basis. 
Again suppose $H' = P\circ H$. Then we have the following:
\begin{align*}
(U_{H'})_{ij}&\super {\eqref{eqn:MUBUEB}}=\textstyle\frac{1}{n}P \circ H \circ \diag(P \circ H,j)^{\dag} \circ H^{\dag}  \circ P^{-1}\circ \diag(H^T\circ P^{-1},i)
\intertext{Again using the fact that $P$ is real and unitary  so, $P^{-1}=P^{\dag}=P^T$. We continue:}
(U_{H'}) _{ij}
& \super {{\eqref{eqn:diag1}}{\eqref{eqn:diag2}}}  = \,\,
\textstyle\frac{1}{n}P \circ H \circ \diag(H,p(j))^{\dag} \circ H^{\dag} \circ P^{-1}\circ \diag(a_{p(0),i},...,a_{p(n-1),i})
\\
& \super{\eqref{eq:DPPD}} = 
\, \, \textstyle\frac{1}{n}P \circ H \circ \diag(H,p(j))^{\dag} \circ H^{\dag}\circ \diag(a_{p^{-1}p(0),i},...,a_{p^{-1}p(n-1),i})\circ P^{-1}
\\
& \super{\eqref{eq:diag}} = \,\,
\textstyle\frac{1}{n}P \circ H \circ \diag(H,p(j))^{\dag} \circ H^{\dag}\circ \diag(H^T,i)\circ P^{-1}
\\
& \super{\eqref{eq:UEBequiv}} \sim \,\,
\textstyle\frac{1}{n}H \circ \diag(H,p(j))^{\dag} \circ H^{\dag}\circ \diag(H^T,i)
\\
& \super {\eqref{eqn:MUBUEB}} =
\, \,(U_H)_{i,p(j)}
\intertext{The case that $H' = H \circ P$ is similar. 
Now suppose $H' = D \circ H$, with $D = \diag(c_1, \ldots, c_n)$, where $|c_i| = 1$. Then we calculate as follows:}
(U_{H'})_{ij}
 & \super {\eqref{eqn:MUBUEB}} =\,\,
\textstyle\frac{1}{n}D \circ H \circ \diag(D \circ H,j)^{\dag} \circ H^\dag \circ D^{\dag} \circ \diag(H^{T} \circ D ^T,i)
\\
& \super{\eqref{eqn:diag3}} = \,\, 
\textstyle\frac{1}{n}D \circ H \circ c_j \diag(H,j)^{\dag} \circ H^{\dag}\circ \diag(H^{T} \circ D,i) \circ D ^\dag \quad
\\
& \super{\eqref{eqn:diag4}} = \,\,
\textstyle\frac{c_j}{n} D \circ  H \circ \diag(H,j)^{\dag} \circ H^{\dag}\circ \diag(H^{T},i) \circ D \circ D^\dag
\\
& \super{\eqref{eq:UEBequiv}}  \sim \,\,
\textstyle\frac{1}{n}H \circ \diag(H,j)^{\dag} \circ H^{\dag}\circ\diag(H^{T},i)
\\
& \super {\eqref{eqn:MUBUEB}} = \,\,
(U_H)_{ij}
\end{align*}
The case $H' = H \circ D$ is similar.
\end{proof}
\begin{proposition}[See~\cite{had4}, Theorem 1]
\label{equivhadUEB2}
All Hadamard matrices on $\C^4$ are equivalent to one of the following Fourier matrices, parameterised by $\alpha \in [0,\frac{\pi}{2}]$:
\begin{equation}
\label{eq:4had}
H_{\alpha} :=
\begin{pmatrix}
1 & 1 & 1 &1 \\
1 & 1 & -1 & -1 \\
1 & -1 &  e^{i \alpha} &  -e^{i \alpha} \\
1 & -1 &  -e^{i \alpha} & e^{i\alpha} 
\end{pmatrix}
\end{equation}
\end{proposition}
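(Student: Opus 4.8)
The statement to prove is Proposition~\ref{equivhadUEB2}: every Hadamard matrix on $\C^4$ is equivalent, in the sense of Definition~\ref{def:equivhad}, to one of the Fourier matrices $H_\alpha$ for $\alpha \in [0, \tfrac{\pi}{2}]$. Since this is cited as coming from \cite{had4}, Theorem~1, the cleanest route is simply to invoke that result, and indeed I expect the paper to do exactly that. But let me sketch the underlying argument in case a self-contained proof is wanted.

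The plan is to use the standard dephasing normalization. First I would observe that by multiplying each row and each column by a unit complex number (which is permitted by Definition~\ref{def:equivhad}), any $4\times 4$ Hadamard $H$ can be brought to \emph{dephased} form, in which the entire first row and first column consist of $1$'s. Concretely, divide row $i$ by $H_{i0}$ and then column $j$ by the new $(0,j)$-entry; since all entries have modulus $1$ this is a legitimate equivalence, and it fixes the first row and column to be all-ones. So it suffices to classify dephased Hadamards of order $4$.

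Next I would write the remaining $3\times 3$ block of unknown phases and impose the orthogonality relations \eqref{had2} between the rows (equivalently \eqref{had3} between columns): each pair of distinct rows must have inner product zero. For a dephased matrix the first entry of every row is $1$, so the row-$i$/row-$j$ orthogonality says $1 + \sum_{k=1}^{3} H_{ik} H_{jk}^* = 0$, i.e. the three remaining cross terms sum to $-1$, and each has modulus $1$. Solving these modulus-one constraints — three unit vectors summing to $-1$ — pins the possible phase patterns down to a one-parameter family, up to the residual freedom of permuting the last two rows and the last two columns. Carrying this through, every solution is equivalent to $H_\alpha$ as displayed in \eqref{eq:4had}, and the reflection $\alpha \mapsto \pi - \alpha$ together with complex conjugation (permuting columns $2,3$) collapses the parameter range to $[0,\tfrac{\pi}{2}]$.

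The main obstacle is the case analysis in solving the unit-modulus equations: ensuring one has found \emph{all} solutions and that each really is equivalent to some $H_\alpha$ (rather than missing a sporadic family) requires care, and it is precisely this classification that is the content of \cite{had4}. For the purposes of this paper I would simply cite \cite{had4}, Theorem~1, as the excerpt's attribution already indicates; no further argument is needed here.
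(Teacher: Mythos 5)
Your proposal matches the paper exactly: the paper gives no proof of this proposition at all, relying entirely on the citation to \cite{had4}, Theorem~1, just as you recommend. Your sketch of the dephasing-and-orthogonality argument is a reasonable outline of what that reference actually does, but it is not required here.
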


\begin{theorem}
\label{thm:d4mubmonomial}
Every unitary error basis for $\C^4$ arising from the Hadamard method is equivalent to a monomial basis.
\end{theorem}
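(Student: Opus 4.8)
The plan is to reduce the statement to the one-parameter family $H_\alpha$ of Proposition~\ref{equivhadUEB2}, and then, for each $\alpha$, to exhibit a single unitary conjugating the whole Hadamard basis into monomial form. For the reduction, note first that Definition~\ref{def:equivUEB}-equivalence of families of unitaries is an equivalence relation, and that if a unitary $Y$ simultaneously monomializes a family $\mathcal A$ --- that is, $Y \circ A \circ Y^\dag$ is monomial for every $A \in \mathcal A$ --- then $\mathcal A$ is equivalent to the monomial basis $\{\, Y \circ A \circ Y^\dag \mid A \in \mathcal A \,\}$, taking $U = Y$, $V = Y^\dag$, $c = 1$ in~\eqref{eq:UEBequiv}. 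By Proposition~\ref{equivhadUEB2} every Hadamard matrix on $\C^4$ is equivalent (Definition~\ref{def:equivhad}) to $H_\alpha$ for some $\alpha \in [0,\frac{\pi}{2}]$, and by Proposition~\ref{equivhadUEB1} the associated Hadamard bases are then Definition~\ref{def:equivUEB}-equivalent; so it suffices, for each fixed $\alpha$, to produce a unitary simultaneously monomializing the basis $\{(U_{H_\alpha})_{ij}\}$.

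Here the special structure of $4 \times 4$ Hadamards is the engine. Up to a fixed permutation of its rows and columns, $H_\alpha$ is the Di\c{t}\u{a} deformation $\tinymatrix{F_2 & F_2 \\ F_2 D_\alpha & -F_2 D_\alpha}$ of $F_2 \otimes F_2$, with $F_2 = \tinymatrix{1&1\\1&-1}$ and $D_\alpha = \diag(1, e^{i\alpha})$; equivalently, still up to that permutation, $H_\alpha = (I_2 \otimes F_2) \circ \Delta_\alpha \circ (F_2 \otimes I_2)$ with $\Delta_\alpha = \diag(1,1,1,e^{i\alpha})$, so all the $\alpha$-dependence sits in a single controlled-phase factor. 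By Theorem~\ref{thm:MUBQLS} we may write $(U_{H_\alpha})_{ij} = (Q_{H_\alpha})_j \circ \diag(H_\alpha^{T}, i)$, where the $(Q_{H_\alpha})_j$ are simultaneously diagonalized by conjugation by $\tfrac12 H_\alpha^\dag$ (immediate from~\eqref{eq:hadqls}) and the $\diag(H_\alpha^T, i)$ are already diagonal in the standard basis. I would then write down an explicit $\alpha$-dependent unitary $Y_\alpha$ --- built by modifying the half-Fourier unitary $\tfrac12 H_\alpha^\dag$ with a correction read off from the tensor decomposition, designed to annihilate the non-monomial $2 \times 2$ blocks involving $e^{2i\alpha}$ that appear in $\tfrac14 H_\alpha^\dag \circ \diag(H_\alpha^T, i) \circ H_\alpha$ (these blocks share a common eigenbasis, diagonalized by $\tfrac1{\sqrt 2} F_2$) --- and verify, by a finite mechanical computation over the sixteen elements, that $Y_\alpha \circ (U_{H_\alpha})_{ij} \circ Y_\alpha^\dag$ is monomial in every case. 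It is convenient to carry this out for a slightly larger family $\F_\alpha$ containing the Hadamard basis and the matrices $(Q_{H_\alpha})_j$, since that stronger statement (with Proposition~\ref{lem:UEBmonomial}) is what is reused in the proof of Proposition~\ref{prop:MnotHad}. Together with the reduction, this proves the theorem.

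The crux is pinning down $Y_\alpha$. Because the Di\c{t}\u{a} phase couples the two tensor factors, $Y_\alpha$ can be neither $\alpha$-independent nor of pure tensor form: the naive guess $\tfrac12 H_\alpha^\dag$ does simultaneously monomialize the basis when $\alpha = 0$, but fails for $\alpha \ne 0$ because $\tfrac14 H_\alpha^\dag \circ \diag(H_\alpha^T, i) \circ H_\alpha$ then has rows with two nonzero entries; and a purely block-diagonal correction is not enough either, since that term is \emph{anti}-block-diagonal in the relevant splitting, so the correction must genuinely mix the two blocks while still keeping the conjugated $(Q_{H_\alpha})_j$ monomial. Finding this balance, and then checking monomiality of all sixteen conjugates, is the real work; by contrast the reduction along Propositions~\ref{equivhadUEB1} and~\ref{equivhadUEB2} and the bookkeeping with Definition~\ref{def:equivUEB} are routine.
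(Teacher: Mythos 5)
Your reduction is exactly the paper's: use Propositions~\ref{equivhadUEB2} and~\ref{equivhadUEB1} to restrict attention to the bases $\mathcal F_\alpha$ arising from the Fourier family $H_\alpha$, and then find, for each $\alpha$, a unitary that simultaneously monomializes all sixteen elements. But the second step is where the entire content of the theorem lives, and you do not carry it out: you describe a unitary $Y_\alpha$ that you \emph{would} write down ("built by modifying the half-Fourier unitary $\tfrac12 H_\alpha^\dag$ with a correction read off from the tensor decomposition") and a verification you \emph{would} perform, but no candidate is exhibited and nothing is verified. As it stands the argument establishes only that the theorem would follow if such a $Y_\alpha$ exists, which is not a proof. (A minor additional point: the "slightly larger family" containing the $(Q_{H_\alpha})_j$ is unnecessary, since $\diag(H_\alpha^T,0)=\I_4$ gives $(Q_{H_\alpha})_j=(U_{H_\alpha})_{0j}$, so these matrices already belong to the Hadamard basis.)

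More seriously, the structural claims you use to steer the search are wrong, so the search as described is aimed in the wrong direction. You assert that the monomializing unitary "can be neither $\alpha$-independent nor of pure tensor form", justified only by the observation that the particular guess $\tfrac12 H_\alpha^\dag$ fails for $\alpha\neq 0$ --- a non sequitur. In fact the paper's proof exhibits a single fixed real unitary $Y$, given in equation~\eqref{Y}, with no dependence on $\alpha$ whatsoever, which simultaneously monomializes $\mathcal F_\alpha$ for \emph{every} $\alpha$; the resulting monomial basis $\mathcal F'_\alpha$ is listed in Section~\ref{sec:Fprime} (the $\alpha$-dependence survives only in the phases of the nonzero entries, which is harmless for monomiality). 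So the missing ingredient is not an $\alpha$-dependent correction balancing two competing constraints, but one concrete matrix and a finite check. To repair your proof you must actually produce such a matrix --- for instance $Y$ itself --- and verify the sixteen conjugates.
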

\begin{proof}
Write $\mathcal F _\alpha$ for the unitary error basis arising from $H_\alpha$ by the Hadamard method, for some fixed $\alpha \in[0,\frac{\pi}{2}]$. By Propositions~\ref{equivhadUEB1} and~\ref{equivhadUEB2} all unitary error bases arising from Hadamards in dimension $4$ are equivalent to $\mathcal{F} _\alpha$, for some value of $\alpha$. But the following unitary matrix simultaneously monomializes $\mathcal{F} _\alpha$, for all values of $\alpha$:
\begin{equation}
\label{Y}
Y:=\frac{1}{\sqrt{2}}
\begin{pmatrix}
0 & 0 &-1 & 1 \\
-1 & 1 & 0 & 0 \\
0 & 0 & 1 & 1 \\
1 & 1 & 0 & 0
\end{pmatrix}
\end{equation}
The basis $\mathcal{F_\alpha'}=\{Y \circ F_{ij} \circ Y^{\dag}|F_{ij} \in \mathcal{F} \}$ is listed in Section~\ref{sec:Fprime}, and is monomial and equivalent to $\mathcal F_\alpha$. This completes the proof.
\end{proof}

\begin{corollary}
\label{cor:Mnothad}
The basis $\mathcal M$ of Example~\ref{ex:qlsueb} is not equivalent to a Hadamard basis.
\end{corollary}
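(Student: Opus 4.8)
The plan is to combine the two preceding results about the basis $\mathcal M$ in a single deduction. By Corollary~\ref{cor:MnotSM} — or rather by its underlying Theorem~\ref{thm:Mnotmon} — we already know that $\mathcal M$ is not equivalent to a monomial basis. By Theorem~\ref{thm:d4mubmonomial}, every unitary error basis for $\C^4$ arising from the Hadamard method \emph{is} equivalent to a monomial basis. Since $\mathcal M$ is a unitary error basis on $\C^4$ (it is the quantum shift-and-multiply basis of Example~\ref{ex:qlsueb}, which is a UEB by Theorem~\ref{thm:main}), it cannot be equivalent to any Hadamard basis: if it were, then transitivity of the equivalence relation of Definition~\ref{def:equivUEB} would force $\mathcal M$ to be equivalent to a monomial basis, contradicting Theorem~\ref{thm:Mnotmon}.

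So the write-up is essentially a two-line chain. First I would note that $\mathcal M$ is a UEB on $\C^4$. Then suppose for contradiction that $\mathcal M \sim U_H$ for some Hadamard matrix $H$ of order $4$. By Theorem~\ref{thm:d4mubmonomial}, $U_H$ is equivalent to a monomial basis $\mathcal N$. Composing the two equivalences — which is legitimate since the relation in Definition~\ref{def:equivUEB} is clearly transitive (compose the unitaries $U_1, U_2$ and $V_1, V_2$ and multiply the scalars) — gives $\mathcal M \sim \mathcal N$ with $\mathcal N$ monomial. This contradicts Theorem~\ref{thm:Mnotmon}, completing the proof.

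The only point requiring any care is that the equivalence of unitary error bases from Definition~\ref{def:equivUEB} is genuinely transitive, so that the two equivalences can be chained; this is immediate from the form $B = c\, U \circ A \circ V$, and could be stated in one sentence or simply taken as understood. There is no real obstacle here — all the substance was already done in Theorem~\ref{thm:Mnotmon} (the explicit $12$th-power commutator computation) and Theorem~\ref{thm:d4mubmonomial} (the monomializing matrix $Y$). Accordingly I would keep the proof to a couple of sentences.
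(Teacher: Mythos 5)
Your proposal is correct and is exactly the paper's argument: the paper proves this corollary by citing Theorems~\ref{thm:Mnotmon} and~\ref{thm:d4mubmonomial}, and your write-up merely makes the (routine) transitivity step explicit. Nothing further is needed.
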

\begin{proof}
Follows from Theorems~\ref{thm:Mnotmon} and~\ref{thm:d4mubmonomial}.
\end{proof}

\section{Algebraic method}

Another technique for constructing UEBs is the \textit{algebraic method}, due to Knill~\cite{knill}. UEBs obtained using this technique are called \textit{nice error bases}. The main result in this section is Corollary~\ref{cor:Mnotnice}, that the basis $\mathcal M$ of Example~\ref{ex:qlsueb} is not equivalent to a nice error basis. Throughout this section, we use `$\propto$' to denote equality up to multiplication by a unit complex number.

Recall that for a finite group $G$, an \textit{$n$-dimensional unitary projective representation} is a function $\rho: G \to U(n)$, valued in the group of $n$-by-$n$ unitary matrices, and for any $g,g' \in G$ a complex number $\omega _{g,g'} \in \C$ with unit norm, such that we have $\rho(gg')=\omega_{g,g'}\rho(g) \rho(g')$ and $\rho(1)=\mathbb{I}_n$ where $1$ is the group identity. We therefore have the following:
\begin{equation}
\label{eq:ghprop}
\rho(g) \rho(g')
\propto \rho(gg')
\text{ for all $g,g' \in G$}
\end{equation}
The following result will also be useful.
\begin{lemma}
Given a unitary projective representation $\rho$ of a group $G$, the following holds:
\begin{equation}
\label{eq:rhoginverse}
\textstyle
\rho (g) ^\dag \propto \rho(g ^\inv )
\mathrm{\text{\rm{} for all $g \in G$}}
\end{equation}
\end{lemma}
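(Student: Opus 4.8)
The plan is to apply the multiplicativity condition of the projective representation to the pair $(g, g^{-1})$ and then invoke unitarity of $\rho(g)$.

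First I would recall that, by definition of a unitary projective representation, $\rho(gg') = \omega_{g,g'}\,\rho(g)\rho(g')$ with $\omega_{g,g'}$ of unit norm, and $\rho(1) = \mathbb{I}_n$. Specializing to $g' = g^{-1}$ gives $\mathbb{I}_n = \rho(1) = \rho(gg^{-1}) = \omega_{g,g^{-1}}\,\rho(g)\rho(g^{-1})$, so that $\rho(g)\rho(g^{-1}) = \omega_{g,g^{-1}}^{-1}\,\mathbb{I}_n$, and hence
\begin{equation*}
\rho(g^{-1}) = \omega_{g,g^{-1}}^{-1}\,\rho(g)^{-1}.
\end{equation*}

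Next I would use that $\rho$ takes values in $U(n)$, so $\rho(g)^{-1} = \rho(g)^\dag$; substituting gives $\rho(g^{-1}) = \omega_{g,g^{-1}}^{-1}\,\rho(g)^\dag$. Since $|\omega_{g,g^{-1}}| = 1$ the scalar $\omega_{g,g^{-1}}^{-1}$ is a unit complex number, and as $\propto$ is symmetric (a unit-norm scalar has a unit-norm inverse) this is precisely the claim $\rho(g)^\dag \propto \rho(g^{-1})$.

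There is no real obstacle here: the statement is a one-line consequence of the representation identity \eqref{eq:ghprop} (applied in the sharpened form with the cocycle $\omega$, since \eqref{eq:ghprop} only records proportionality) together with $\rho(1) = \mathbb{I}_n$ and unitarity. The only point worth a word is that $\propto$ is an equivalence relation on matrices, so the direction of the proportionality may be freely reversed.
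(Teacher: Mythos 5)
Your argument is correct and is essentially the paper's proof: both apply the representation identity to the product $g\,g^{-1}=1$ together with $\rho(1)=\mathbb{I}_n$ and unitarity of $\rho(g)$, differing only in whether one cancels $\rho(g)$ by left-multiplying with $\rho(g)^\dag$ or by explicitly inverting it via the cocycle. No gap.
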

\begin{proof}As follows:
\mbox{$\!\rho(g) ^\dag
=
\rho(g)^\dag \rho(1) = \rho(g)^\dag \rho(gg^\inv)
\super {\eqref{eq:ghprop}} \propto
\rho(g) ^\dag \rho(g) \rho(g^\inv) = \rho(g ^{-1})$.\hspace{25pt}}\mbox{\hspace{-70pt}}
\end{proof}

We now give the definition of a nice error basis, and show that a nice error basis is a unitary error basis.
\begin{definition}[Nice error basis. See~\cite{knill}, Section 2]
\label{def:algueb}
Let $G$ be a finite group of order $n^2$, and let $\rho$ be an $n$-dimensional unitary projective representation of $G$, such that for all $g \in G$ not equal to the identity, we have the following:
\begin{equation}
\label{eq:tr0}
\Tr(\rho(g)) = 0
\end{equation}
Then a \textit{nice error basis} $\mathcal R _{G,\rho} := \{ \rho(g) \,|\, g \in G \}$ is the image of $\rho$.
\end{definition}

\begin{lemma}[See \cite{klapp}, Lemma 3]
A nice error basis is a unitary error basis.
\end{lemma}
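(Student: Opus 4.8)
The plan is to verify the two defining conditions of a unitary error basis for the family $\mathcal R_{G,\rho} = \{\rho(g) \mid g \in G\}$: that each $\rho(g)$ is unitary, and that the orthogonality relation $\Tr(\rho(g)^\dag \circ \rho(h)) = n\,\delta_{gh}$ holds. Unitarity is immediate, since $\rho$ is valued in $U(n)$ by definition of a unitary projective representation, and the family has the right cardinality $n^2 = |G|$ (using that $\rho$ is injective, which follows from the trace condition, since if $\rho(g) = \rho(h)$ then $\rho(g h^{-1}) \propto \I_n$ has nonzero trace, forcing $gh^{-1} = 1$). So the substance is the orthogonality condition.

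First I would treat the diagonal case $g = h$: then $\rho(g)^\dag \circ \rho(g) = \I_n$ by unitarity, and $\Tr(\I_n) = n$, as required. For the off-diagonal case $g \neq h$, I would use equation~\eqref{eq:rhoginverse} to write $\rho(g)^\dag \propto \rho(g^{-1})$, and then equation~\eqref{eq:ghprop} to get $\rho(g)^\dag \circ \rho(h) \propto \rho(g^{-1}) \rho(h) \propto \rho(g^{-1}h)$. Since $g \neq h$, the element $g^{-1}h$ is not the group identity, so by the hypothesis~\eqref{eq:tr0} we have $\Tr(\rho(g^{-1}h)) = 0$. Because the trace is linear, multiplying $\rho(g^{-1}h)$ by a unit complex scalar does not change whether its trace vanishes, so $\Tr(\rho(g)^\dag \circ \rho(h)) = 0$. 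Combining the two cases gives $\Tr(\rho(g)^\dag \circ \rho(h)) = n\,\delta_{gh}$.

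The only mild subtlety — and the closest thing to an obstacle — is bookkeeping the `$\propto$' relation carefully: the composite $\rho(g)^\dag \circ \rho(h)$ equals $\rho(g^{-1}h)$ only up to an unknown unit scalar, so one must phrase the argument entirely in terms of ``the trace vanishes'' rather than tracking the exact scalar, which is harmless since we only need to distinguish $0$ from $n$. There is also the implicit point that the $n^2$ matrices $\rho(g)$ are genuinely distinct (so that they can form a basis of the $n^2$-dimensional space of matrices), which, as noted above, is forced by the trace condition. With injectivity in hand, orthogonality of $n^2$ vectors in an $n^2$-dimensional inner product space gives a basis automatically, completing the proof.
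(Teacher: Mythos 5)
Your proof is correct and follows essentially the same route as the paper's (which it attributes to \cite{klapp}): unitarity is immediate, and orthogonality for $g \neq h$ follows from $\Tr(\rho(g)^\dag \rho(h)) \propto \Tr(\rho(g^{-1}h)) = 0$ via equations~\eqref{eq:rhoginverse},~\eqref{eq:ghprop} and~\eqref{eq:tr0}. Your extra remark that injectivity of $\rho$ is forced by the trace condition is a correct and worthwhile addition to the bookkeeping.
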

\ignore{
\begin{proof}
NOT OUR PROOF, CANDIDATE FOR OMISSION. It is clear that the elements of a nice error basis are unitary. The trace condition is satisfied, since for all $g \neq g'$ we have the following:
\begin{equation}
\Tr(\rho(g') \rho(g) ^\dag)
\super {\eqref{eq:rhoginverse}} \propto
\Tr(\rho(g') \rho (g ^\inv))
\super {\eqref{eq:ghprop}} \propto
\Tr(\rho(g' g^\inv))
\super {\eqref{eq:tr0}} =
0
\end{equation}
The final equality follows since $g \neq g'$ exactly when $g' g ^\inv \neq 1$. Since $G$ has order $n^2$, the unitary error basis conditions are satisfied.
\end{proof}
}

We now prove a key proposition, which we will use to establish that our example basis $\M$ of Example~\ref{ex:qlsueb} is not equivalent to a nice error basis.
\def\I{\mathbb{I}}
\def\S{\mathcal{S}}
\begin{proposition}
\label{thm:closedunderadjoints}
Let $\mathcal S$ be a unitary error basis containing the identity matrix $\I_n$, such that $\S$ is equivalent to a nice error basis. Then up to multiplication by a unit complex number, $\S$ is closed under taking adjoints.
\end{proposition}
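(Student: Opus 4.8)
The plan is to transport the problem along the equivalence to the nice error basis $\mathcal R_{G,\rho}$, use the group structure there, and transport back. Write $\mathcal S = \{S_k\}$ with $S_0 = \I_n$, and suppose $\mathcal S$ is equivalent to $\mathcal R_{G,\rho}$ via unitaries $U,V$, so that for each $k$ there is $g_k \in G$ and a unit complex number $c_k$ with $\rho(g_k) = c_k\, U \circ S_k \circ V$. First I would pin down $V$ in terms of $U$: since $S_0 = \I_n$, the element of $\mathcal R_{G,\rho}$ matched to it is some $\rho(g_0) = c_0 \, U \circ V$, and because $\mathcal R_{G,\rho}$ is closed under adjoints up to phase (it is the image of $\rho$, and $\rho(g)^\dag \propto \rho(g^{-1}) \in \mathcal R_{G,\rho}$ by~\eqref{eq:rhoginverse}), we may as well renormalize; the cleanest route is to observe $U \circ V \propto \rho(g_0)$, hence $V \propto U^\dag \circ \rho(g_0)$.

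Next, the key computation: for a given $S_k$, I want to produce $S_k^\dag$, up to phase, as another element of $\mathcal S$. Using $S_k \propto U^\dag \circ \rho(g_k) \circ V^\dag \propto U^\dag \circ \rho(g_k) \circ \rho(g_0)^\dag \circ U$, take adjoints to get $S_k^\dag \propto U^\dag \circ \rho(g_0) \circ \rho(g_k)^\dag \circ U$. Now apply~\eqref{eq:rhoginverse} and~\eqref{eq:ghprop}: $\rho(g_0)\rho(g_k)^\dag \propto \rho(g_0)\rho(g_k^{-1}) \propto \rho(g_0 g_k^{-1})$. Since $G$ is a group, $g_0 g_k^{-1} =: h \in G$, so $\rho(h) \in \mathcal R_{G,\rho}$, and therefore $\rho(h) = c_m U \circ S_m \circ V$ for some index $m$. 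Substituting back, $S_k^\dag \propto U^\dag \circ \rho(h) \circ U \propto U^\dag \circ (U \circ S_m \circ V) \circ U \propto S_m \circ V \circ U$. But $V \circ U \propto U^\dag \circ \rho(g_0) \circ U$, which by the same reasoning ($g_0 \in G$ so $\rho(g_0) = c_{m'} U S_{m'} V$) is $\propto S_{m'}$... — here I need to be careful, since I want $S_k^\dag$ itself to lie in $\mathcal S$ up to phase, not merely a product of two elements.

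The clean fix is to not substitute for $V$ prematurely but instead exploit that $\mathcal S$, being equivalent to the nice error basis which contains $\rho(1) = \I_n$, can be assumed (after absorbing into $U,V$) to be matched so that $S_0 = \I_n \leftrightarrow \rho(1)$, i.e. $U \circ V \propto \I_n$, hence $V \propto U^\dag$. With this normalization $S_k \propto U^\dag \circ \rho(g_k) \circ U$, a genuine conjugation, so $S_k^\dag \propto U^\dag \circ \rho(g_k)^\dag \circ U \propto U^\dag \circ \rho(g_k^{-1}) \circ U \propto S_m$ where $S_m$ is the element of $\mathcal S$ matched to $\rho(g_k^{-1}) \in \mathcal R_{G,\rho}$. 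That shows $\mathcal S$ is closed under adjoints up to a unit scalar. The main obstacle is precisely justifying the normalization $V \propto U^\dag$: the definition of equivalence of UEBs (Definition~\ref{def:equivUEB}) only gives arbitrary unitaries $U,V$ and a per-element phase, so I must argue that since $\I_n \in \mathcal S$ is sent to some $\rho(g_0) \in \mathcal R_{G,\rho}$, I can replace $\rho$ by the projective representation $g \mapsto \rho(g_0)^\dag \rho(g)$ precomposed with left-translation by $g_0^{-1}$ — which has the same image $\mathcal R_{G,\rho}$ and still satisfies the trace condition~\eqref{eq:tr0} (translation permutes the non-identity elements among themselves) — so that without loss of generality $\I_n$ is matched to $\rho(1) = \I_n$, forcing $U \circ V \propto \I_n$. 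Once that reduction is in place the rest is the short chain of~\eqref{eq:ghprop} and~\eqref{eq:rhoginverse} above.
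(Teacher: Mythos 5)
Your underlying idea is the right one, but as written the argument has a gap at exactly the point you flag as ``the main obstacle'': the normalization $V \propto U^\dag$. First, the twisted map you propose, $g \mapsto \rho(g_0)^\dag \rho(g_0^{-1}g)$, does not send $1$ to $\I_n$ (it sends it to $\rho(g_0)^\dag\rho(g_0^{-1}) \propto \rho(g_0^{-2})$), so it is not a projective representation in the required normalized sense; the twist you want is rather something like $\sigma(g) := \rho(gg_0)\circ\rho(g_0)^\dag$. Second, and more importantly, even with a correct twist, ``so that WLOG $\I_n$ is matched to $\sigma(1)=\I_n$, forcing $U\circ V \propto \I_n$'' is an assertion, not an argument: replacing the representation does not change the unitaries $U,V$ realizing the original equivalence, so you must exhibit a \emph{new} pair $U',V'$ realizing $\mathcal S \sim \mathcal R_{G,\sigma}$ and check $U'\circ V' \propto \I_n$. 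This can be done --- with $\sigma(g) = \rho(gg_0)\circ\rho(g_0)^\dag$ one computes $S_k \propto U^\dag \circ\sigma(g_kg_0^{-1})\circ U$ directly from $V \propto U^\dag\circ\rho(g_0)$ --- but that computation is essentially the entire content of the proposition, and it is the part that is missing. (A minor further point: the image of $\sigma$ is not literally $\mathcal R_{G,\rho}$, only equal to it elementwise up to unit scalars; this is harmless for equivalence but should be said.)

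The irony is that your first route was one step from closing, and closing it there is exactly what the paper does. You had $S_k \propto U^\dag\circ\rho(g_k)\circ\rho(g_0)^\dag\circ U$, hence $S_k^\dag \propto U^\dag\circ\rho(g_0)\circ\rho(g_k)^\dag\circ U \propto U^\dag\circ\rho(g_0g_k^{-1})\circ U$. Instead of unwinding $\rho(g_0 g_k^{-1})$ back through $U(\cdot)V$ (which leaves the stray factor $V\circ U$), compare it against the same conjugated form of a general element, $S_m \propto U^\dag\circ\rho(g_mg_0^{-1})\circ U$, and solve $g_mg_0^{-1} = g_0g_k^{-1}$, i.e.\ $g_m = g_0g_k^{-1}g_0$; since $k \mapsto g_k$ is a bijection onto $G$, such an $m$ exists and $S_k^\dag \propto S_m$. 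The paper's proof is precisely this computation, carried out by inserting $\I_n \propto U\rho(h)V$ on both sides of $V^\dag\rho(g)^\dag U^\dag$ and landing on $\mathcal S_{hg^{-1}h}$, with no renormalization of the representation needed.
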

\begin{proof}
Let $\mathcal R _{G,\rho}$ be a nice error basis, and let $\mathcal S = \{ c_g U \rho(g) V \,| \,g \in G\}$ be an equivalent unitary error basis, with elements $\S_g := c_g U \rho(g) V$. Since by hypothesis $\I_n \in \mathcal S$, there is some $h \in G$ with $\S_h = c_h U \rho(h) V = \I_n$. In particular, writing `$\propto$' to indicate equality up to multiplication by a unit complex number, we have the following:
\begin{align}
\label{eq:Ipropto}
\I_n &\propto U \rho(h) V
\\
\label{eq:Sgprop}
\S_g &\propto U \rho(g) V
\text{ for all $g \in G$}
\end{align}
We now perform the following calculation, for any $g \in G$:
\begin{align}
\nonumber
& (\S_g) ^\dag
\super {\eqref{eq:Sgprop}} \propto
V ^\dag \rho(g) ^\dag U^\dag = \I_n V^\dag \rho(g) ^\dag U^\dag \I_n
\super {\eqref{eq:Ipropto}} \propto
U \rho(h) V V^\dag \rho(g) ^\dag U^\dag U \rho(h) V
\\
\nonumber
&
=
U \rho(h) \rho(g) ^\dag \rho (h) V
\super {\eqref{eq:rhoginverse}} \propto
U \rho(h) \rho(g ^\inv) \rho(h) V
\super {\eqref{eq:ghprop}} \propto
U \rho(h g^\inv h) V
\super {\eqref{eq:Sgprop}} \propto
\S_{h g ^\inv h}
\end{align}
So $\S$ is closed under adjoints, up to multiplication by a unit complex number.
\end{proof}
\begin{corollary}
\label{cor:Mnotnice}
The basis $\mathcal M$ of Example~\ref{ex:qlsueb} is not equivalent to a nice error
basis.
\end{corollary}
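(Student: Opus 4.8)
The plan is to derive a contradiction by combining Proposition~\ref{thm:closedunderadjoints} with an explicit computation on the concrete basis $\mathcal M$. Since $\mathcal M$ is a unitary error basis containing $\I_4$ (by Example~\ref{ex:qlsueb} and Theorem~\ref{thm:main}), if it were equivalent to a nice error basis then Proposition~\ref{thm:closedunderadjoints} would force $\mathcal M$ to be closed under taking adjoints, up to multiplication by a unit complex number. So it suffices to exhibit a single element $\mathcal M_{ij}$ whose adjoint $\mathcal M_{ij}^\dag$ is not a unit-complex-number multiple of any of the sixteen matrices in $\mathcal M$.

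First I would fix a candidate element, most naturally one of the non-monomial ones such as $\mathcal M_{01}$ or $\mathcal M_{02}$ already singled out in the proof of Theorem~\ref{thm:Mnotmon} (whose $12$th powers fail to commute, so these are genuinely ``quantum'' entries rather than monomial ones). Then I would compute $\mathcal M_{ij}^\dag$ from the explicit list in Appendix~\ref{sec:M} and compare it against each of the sixteen entries of $\mathcal M$. Two unitary matrices $A$ and $B$ satisfy $A = cB$ for some $|c|=1$ precisely when $A B^\dag = c\I_4$, i.e.\ when $AB^\dag$ is a scalar multiple of the identity; equivalently one can just inspect the pattern of zero and nonzero entries together with the ratios of corresponding entries. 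Checking the zero-pattern alone already eliminates most candidates, and for the few with matching support a single ratio of entries settles it.

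The main obstacle is simply that this is a finite but slightly tedious verification: one must be careful that the comparison is done up to a global phase, and one must genuinely check all sixteen elements rather than an obvious subset. There is no conceptual difficulty, since Proposition~\ref{thm:closedunderadjoints} has already done the structural work; the content here is purely that the particular matrices in Appendix~\ref{sec:M} do not happen to be closed under adjoints even after rescaling. Having found such an $\mathcal M_{ij}$, I would conclude that $\mathcal M$ cannot be equivalent to a nice error basis, completing the proof.

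\begin{proof}
The basis $\mathcal M$ is a unitary error basis by Theorem~\ref{thm:main}, and it contains $\I_4$ by construction (see Appendix~\ref{sec:M}). If $\mathcal M$ were equivalent to a nice error basis, then by Proposition~\ref{thm:closedunderadjoints} it would be closed under taking adjoints up to multiplication by a unit complex number. However, inspection of the sixteen matrices listed in Appendix~\ref{sec:M} shows that the adjoint of $\mathcal M_{01}$ is not equal to any element of $\mathcal M$ up to a global phase: comparing supports and entry ratios rules out every candidate. This contradiction completes the proof.
\end{proof}
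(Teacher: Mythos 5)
Your proposal is correct and follows essentially the same route as the paper: invoke Proposition~\ref{thm:closedunderadjoints} using the fact that $\mathcal M$ contains $\I_4$, then exhibit failure of closure under adjoints by inspecting $\mathcal M_{01}$ in Appendix~\ref{sec:M}. The paper's final verification is just a slightly sharper version of your ``inspection'' step: it observes that the $(1,2)$-entry of $\mathcal M_{01}$ has absolute value $\frac{1}{\sqrt 5}$, so $\mathcal M_{01}^\dag$ has an entry of that absolute value in position $(2,1)$, whereas no element of $\mathcal M$ does --- a single phase-invariant check that avoids comparing all sixteen candidates entry by entry.
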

\begin{proof}
By inspection of the elements of $\mathcal M$, as listed in Section~\ref{sec:M}. For a contradiction, let us assume that $\M$ is equivalent to a nice error basis. Note that $\mathcal M$ contains the identity matrix; then by Proposition~\ref{thm:closedunderadjoints}, it must be closed under taking adjoints, up to a unit complex number. But this is clearly false: for example, the second element of the first row of $\M_{01}$ has absolute value $\frac 1 {\sqrt 5}$, but no member of $\mathcal M$ has an element with the same absolute value in the second element of the first column.
\end{proof}

\bibliographystyle{plainurl}
\bibliography{UEBPaper}

\newpage
\appendix
\section{Lists of unitary error bases}
Here we list the unitary error bases that we make use of in the main text.
\def\minus{\text{-}}


\subsection{The unitary error basis $\M$}
\label{sec:M}
Here we list the unitary error basis $\M$ defined in Example~\ref{ex:qlsueb}.
\def\matrixgap{\hspace{5pt}}
\def\beginmatrix{\left(\!\!\begin{array}{c@{\matrixgap}c@{\matrixgap}c@{\matrixgap}c}}
\def\endmatrix{\end{array}\!\!\right)}
\def\vs{\vphantom{\frac 2 {\sqrt{5}}}}
\[
\scriptsize{ 
\begin{array}{llll}
\M_{00} =
\beginmatrix
1 & 0 & 0 & 0 \\
 0 & 1 & 0 & 0 \\
 0 & 0 & 1 & 0 \\
 0 & 0 & 0 & 1 \\
\endmatrix
&
\M_{01} =
\beginmatrix
 0 & \frac{i}{\sqrt{5}} & \frac{2}{\sqrt{5}} & 0 \\
 \frac{1}{\sqrt{2}} & 0 & 0 & \frac{1}{\sqrt{2}} \\
 \minus\frac{1}{\sqrt{2}} & 0 & 0 & \frac{1}{\sqrt{2}} \\
 0 & \frac{2}{\sqrt{5}} & \frac{i}{\sqrt{5}} & 0 \\
\endmatrix
&
\M_{02} =
\beginmatrix
 0 & \frac{2}{\sqrt{5}} & \frac{i}{\sqrt{5}} & 0 \\
 \frac{1}{\sqrt{2}} & 0 & 0 & \frac{1}{\sqrt{2}} \\
 \frac{1}{\sqrt{2}} & 0 & 0 & \minus\frac{1}{\sqrt{2}} \\
 0 & \frac{i}{\sqrt{5}} & \frac{2}{\sqrt{5}} & 0 \\
\endmatrix
&
\M_{03} =
\beginmatrix
 0\vs & 0 & 0 & 1 \\
 0\vs & 0 & 1 & 0 \\
 0\vs & 1 & 0 & 0 \\
 1\vs & 0 & 0 & 0 \\
\endmatrix
\\
\M_{10} =
\beginmatrix
 1 & 0 & 0 & 0 \\
 0 & i & 0 & 0 \\
 0 & 0 & \minus 1 & 0 \\
 0 & 0 & 0 & \minus i \\
\endmatrix
&
\M_{11} =
\beginmatrix
 0 & \minus \frac{1}{\sqrt{5}} & \minus\frac{2}{\sqrt{5}} & 0 \\
 \frac{1}{\sqrt{2}} & 0 & 0 & \minus\frac{i}{\sqrt{2}} \\
 \minus\frac{1}{\sqrt{2}} & 0 & 0 & \minus\frac{i}{\sqrt{2}} \\
 0 & \frac{2 i}{\sqrt{5}} & \minus\frac{i}{\sqrt{5}} & 0 \\
\endmatrix
 &
\M_{12} =
\beginmatrix
 0 & \frac{2 i}{\sqrt{5}} & \minus\frac{i}{\sqrt{5}} & 0 \\
 \frac{1}{\sqrt{2}} & 0 & 0 & \minus\frac{i}{\sqrt{2}} \\
 \frac{1}{\sqrt{2}} & 0 & 0 & \frac{i}{\sqrt{2}} \\
 0 & \minus\frac{1}{\sqrt{5}} & \minus\frac{2}{\sqrt{5}} & 0 \\
\endmatrix
&
\M_{13} =
\beginmatrix
 0\vs & 0 & 0 & \minus i \\
 0\vs & 0 & \minus1 & 0 \\
 0\vs & i & 0 & 0 \\
 1\vs & 0 & 0 & 0 \\
\endmatrix
\\
\M_{20} =
\beginmatrix
 1 & 0 & 0 & 0 \\
 0 & \minus1 & 0 & 0 \\
 0 & 0 & 1 & 0 \\
 0 & 0 & 0 & \minus1 \\
\endmatrix
&
\M_{21} =
\beginmatrix
 0 & \minus \frac{i}{\sqrt{5}} & \frac{2}{\sqrt{5}} & 0 \\
 \frac{1}{\sqrt{2}} & 0 & 0 & \minus\frac{1}{\sqrt{2}} \\
 \minus\frac{1}{\sqrt{2}} & 0 & 0 & \minus\frac{1}{\sqrt{2}} \\
 0 & \minus \frac{2}{\sqrt{5}} & \frac{i}{\sqrt{5}} & 0 \\
\endmatrix
&
\M_{22} =
\beginmatrix
 0 & \minus\frac{2}{\sqrt{5}} & \frac{i}{\sqrt{5}} & 0 \\
 \frac{1}{\sqrt{2}} & 0 & 0 & \minus\frac{1}{\sqrt{2}} \\
 \frac{1}{\sqrt{2}} & 0 & 0 & \frac{1}{\sqrt{2}} \\
 0 & \minus\frac{i}{\sqrt{5}} & \frac{2}{\sqrt{5}} & 0 \\
\endmatrix
&
\M_{23} =
\beginmatrix
 0\vs & 0 & 0 & \minus1 \\
 0\vs & 0 & 1 & 0 \\
 0\vs & \minus 1 & 0 & 0 \\
 1\vs & 0 & 0 & 0 \\
\endmatrix
\\
\M_{30} =
\beginmatrix
 1 & 0 & 0 & 0 \\
 0 & \minus i & 0 & 0 \\
 0 & 0 & \minus1 & 0 \\
 0 & 0 & 0 & i \\
\endmatrix
&
\M_{31} =
\beginmatrix
 0 & \frac{1}{\sqrt{5}} & \minus\frac{2}{\sqrt{5}} & 0 \\
 \frac{1}{\sqrt{2}} & 0 & 0 & \frac{i}{\sqrt{2}} \\
 \minus\frac{1}{\sqrt{2}} & 0 & 0 & \frac{i}{\sqrt{2}} \\
 0 & \minus\frac{2 i}{\sqrt{5}} & \minus\frac{i}{\sqrt{5}} & 0 \\
\endmatrix
&
\M_{32} =
\beginmatrix
 0 & \minus\frac{2 i}{\sqrt{5}} & \minus\frac{i}{\sqrt{5}} & 0 \\
 \frac{1}{\sqrt{2}} & 0 & 0 & \frac{i}{\sqrt{2}} \\
 \frac{1}{\sqrt{2}} & 0 & 0 & \minus\frac{i}{\sqrt{2}} \\
 0 & \frac{1}{\sqrt{5}} & \minus\frac{2}{\sqrt{5}} & 0 \\
\endmatrix
&
\M_{33} =
\beginmatrix
 0\vs & 0 & 0 & i \\
 0\vs & 0 & \minus1 & 0 \\
 0\vs & \minus i & 0 & 0 \\
 1\vs & 0 & 0 & 0 \\
\endmatrix
\end{array}
}
\]

\subsection{The unitary error basis $\F'$}
\label{sec:Fprime}
Here we list the unitary error basis $\mathcal{F'}$ defined in the proof of Theorem~\ref{thm:d4mubmonomial}. 
\[
\scriptsize\begin{array}{llll}
\F'_{00} = 
\beginmatrix
  1 & 0 & 0 & 0 \\
 0 &  1 & 0 & 0 \\
 0 & 0 & 1 & 0 \\
 0 & 0 & 0 & 1 \\
\endmatrix
&
\F'_{01}=
\beginmatrix
 \minus 1 & 0 & 0 & 0 \\
 0 & \minus 1 & 0 & 0 \\
 0 & 0 & 1 & 0 \\
 0 & 0 & 0 & 1 \\
\endmatrix
&
\F'_{02}=
\beginmatrix
 0 & 1 & 0 & 0 \\
 e^{\minus 2 i a} & 0 & 0 & 0 \\
 0 & 0 & 0 & 1 \\
 0 & 0 & 1 & 0 \\
\endmatrix
&
\F'_{03}=
\beginmatrix
 0 & \minus 1 & 0 & 0 \\
 \minus e^{\minus 2 i a} & 0 & 0 & 0 \\
 0 & 0 & 0 & 1 \\
 0 & 0 & 1 & 0 \\
\endmatrix
\\
\F'_{10}=
\beginmatrix
 \minus 1 & 0 & 0 & 0 \\
 0 & 1 & 0 & 0 \\
 0 & 0 & \minus 1 & 0 \\
 0 & 0 & 0 & 1 \\
\endmatrix
& \F'_{11}=
\beginmatrix
 1 & 0 & 0 & 0 \\
 0 & \minus 1 & 0 & 0 \\
 0 & 0 & \minus 1 & 0 \\
 0 & 0 & 0 & 1 \\
\endmatrix
& \F'_{12}=
\beginmatrix
 0 & 1 & 0 & 0 \\
 \minus e^{\minus 2 i a} & 0 & 0 & 0 \\
 0 & 0 & 0 & 1 \\
 0 & 0 & \minus 1 & 0 \\
\endmatrix
& \F'_{13}=
\beginmatrix
 0 & \minus 1 & 0 & 0 \\
 e^{\minus 2 i a} & 0 & 0 & 0 \\
 0 & 0 & 0 & 1 \\
 0 & 0 & \minus 1 & 0 \\
\endmatrix
\\ \F'_{20}=
\beginmatrix
 0 & 0 & \minus e^{i a} & 0 \\
 0 & 0 & 0 & \minus 1 \\
 \minus e^{i a} & 0 & 0 & 0 \\
 0 & \minus 1 & 0 & 0 \\
\endmatrix
& \F'_{21}=
\beginmatrix
 0 & 0 & e^{i a} & 0 \\
 0 & 0 & 0 & 1 \\
 \minus e^{i a} & 0 & 0 & 0 \\
 0 & \minus 1 & 0 & 0 \\
\endmatrix
& \F'_{22}=
\beginmatrix
 0 & 0 & 0 & \minus 1 \\
 0 & 0 & \minus e^{\minus i a} & 0 \\
 0 & \minus 1 & 0 & 0 \\
 \minus e^{i a} & 0 & 0 & 0 \\
\endmatrix
& \F'_{23}=
\beginmatrix
 0 & 0 & 0 & 1 \\
 0 & 0 & e^{\minus i a} & 0 \\
 0 & \minus 1 & 0 & 0 \\
 \minus e^{i a} & 0 & 0 & 0 \\
\endmatrix
\\ \F'_{30}=
\beginmatrix
 0 & 0 & e^{i a} & 0 \\
 0 & 0 & 0 & \minus 1 \\
 e^{i a} & 0 & 0 & 0 \\
 0 & \minus 1 & 0 & 0 \\
\endmatrix
& \F'_{31}=
\beginmatrix
 0 & 0 & \minus e^{i a} & 0 \\
 0 & 0 & 0 & 1 \\
 e^{i a} & 0 & 0 & 0 \\
 0 & \minus 1 & 0 & 0 \\
\endmatrix
& \F'_{32}=
\beginmatrix
 0 & 0 & 0 & \minus 1 \\
 0 & 0 & e^{\minus i a} & 0 \\
 0 & \minus 1 & 0 & 0 \\
 e^{i a} & 0 & 0 & 0 \\
\endmatrix
&
\F'_{33}=
\beginmatrix
 0 & 0 & 0 & 1 \\
 0 & 0 & \minus e^{\minus i a} & 0 \\
 0 & \minus 1 & 0 & 0 \\
 e^{i a} & 0 & 0 & 0 \\
\endmatrix
\end{array}
\]

\end{document}